\tikzset{iNode/.style={draw=myblue, circle}}
\tikzset{fNode/.style={draw=mygreen, rectangle}}
\definecolor{myblue}{RGB}{80,80,160}
\definecolor{mygreen}{RGB}{80,160,80}
\def\andrei#1{\sethlcolor{lightblue}\hl{\texttt{#1}}}
\definecolor{lightred}{rgb}{1, .60, 0.60}
\definecolor{lightblue}{rgb}{.60,.60,1}
\definecolor{gold}{HTML}{FDD017}
\def\nottoobig#1{{\hbox{$\left#1\vcenter to1.111\ht\strutbox{}\right.\n@space$}}}
\newtheorem{theorem}{Theorem}[section]
\newtheorem{lemma}[theorem]{Lemma}
\newtheorem{claim}[theorem]{Claim}
\newtheorem{definition}[theorem]{Definition}
\theoremstyle{remark}
\newtheorem{remark}{Remark}
\newtheorem{example}[theorem]{Example}
\newcommand{\poly}{{\rm poly}}
\def\nottoobig#1{{\hbox{$\left#1\vcenter
to1.111\ht\strutbox{}\right.\n@space$}}}
\newcommand{\ie}{$\mbox{i.e.}$}
\newlength{\filength}
\newsavebox{\gcbox}
\sbox{\gcbox}{\framebox[\filength]{\rule{0ex}{2ex}}}
\newcommand{\singlespacing}{\let\CS=
\@currsize\renewcommand{\baselinestretch}{1}\tiny\CS}
\newcommand{\singlespacingplus}{\let\CS=
\@currsize\renewcommand{\baselinestretch}{1.25}\tiny\CS}
\newcommand{\doublespacing}{\let\CS=
\@currsize\renewcommand{\baselinestretch}{1.75}\tiny\CS}
\newcommand{\draftspacing}{\let\CS=
\@currsize\renewcommand{\baselinestretch}{2.0}\tiny\CS}
\def\zo{\{0,1\}}
\newcommand{\leqp}{\leq^{+}}
\newcommand{\geqp}{\geq^{+}}
\newcommand{\eqp}{=^{+}}
\newcommand{\rt}{{\cal{R}}}
\newcommand{\ttt}{t_{x,y}}
\newcommand{\ttx}{\tilde{t}}
\newcommand{\ffx}{\tilde{f}}
\newcommand{\fff}{f_{x,y}}
\def\@listI{\leftmargin\leftmargini \parsep 4.5pt plus 1pt minus 1pt\topsep6pt plus 2pt minus 2pt \itemsep  2pt plus 2pt minus 1pt}
\let\@listi\@listI
\author{ 
{Andrei Romashchenko\/}
\thanks{Laboratoire d'Informatique, de Robotique et de Micro\'electronique de Montpellier (LIRMM) \& CNRS,
email:~\texttt{andrei.romashchenko@lirmm.fr}}
{\quad Marius Zimand\/}
\thanks{  Department of Computer and Information Sciences, Towson University,
Baltimore, MD. \texttt{http://orion.towson.edu/\~{ }mzimand}. Partially supported by NSF grant CCF 1811729.}}
\title{On a conditional inequality in Kolmogorov complexity and its applications in communication complexity}
\date{}
\begin{document}

\maketitle
\begin{abstract}
Romashchenko and Zimand~\cite{rom-zim:c:mutualinfo} have shown that if we partition the set of pairs $(x,y)$ of $n$-bit strings into combinatorial rectangles, then $I(x:y) \geq I(x:y \mid t(x,y)) - O(\log n)$, where $I$ denotes mutual information in the Kolmogorov complexity sense, and $t(x,y)$ is the rectangle containing $(x,y)$. We observe that this inequality can be extended to coverings with rectangles which may overlap. The new inequality essentially states that in case of a covering with combinatorial rectangles, 
 $I(x:y) \geq I(x:y \mid t(x,y)) - \log \rho - O(\log n)$, where $t(x,y)$ is any rectangle containing $(x,y)$ and $\rho$ is the thickness of the covering, which is the maximum number of rectangles that overlap.  We discuss applications to communication complexity of protocols that are nondeterministic, or randomized, or Arthur-Merlin, and also to the information complexity of interactive protocols.

\end{abstract}

\section{Introduction}
Let us consider three strings $x, y, t$ and their Kolmogorov complexities $C(x), C(y)$, and respectively, $C(t)$.  It is sometimes useful to use the Venn diagram in Figure~\ref{f:figone} to visualize the information relations between the three strings.

 \begin{figure}[h]
 \centering{
\begin{tikzpicture}[shorten >=1pt,scale=0.12,
MyNode/.style={fill=green, circle},
]

\coordinate (x) at (0,0);
\coordinate (a) at (-7,0);
\coordinate (y) at (14,0);
\coordinate (b) at (19,0);
\coordinate (z) at (6,-14);
\coordinate (c) at (6,-21);
\coordinate (c0) at (6, 0.2);
\coordinate (c4) at (7, -3.8);

\node(c1) at (x) [draw, thick, fill=gray, circle through = (y), opacity=0.5] {};
\node(c2) at (y) [draw, thick, fill=gray, circle through = (x), opacity=0.5] {};
\node(c3) at (z) [draw, thick, fill=gray, circle through = (c0), opacity=0.5] {};

\node(X) at (a) {\footnotesize{$C(x)$}};
\node(Y) at (b) {\footnotesize{\quad$C(y)$}};
\node(Z) at (c)  {\footnotesize{$C(t)$}};
\node(I) at (c4)  {\textbf{\footnotesize{$I(x:y:t)$}}};

\end{tikzpicture}
}
\label{f:figone}
\caption{Three strings and their joint information relation}
\end{figure}
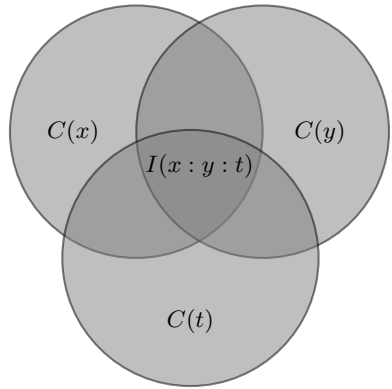
For example, the region contained in the left circle which lies outside the right circle can be thought to represent $C(x \mid y)$, the region at the intersection of the left and right circles can be thought to represent the mutual information $I(x:y)$, and so on.  There is however a nuisance: this visual representation is not always correct,  the potential trouble maker being the darker region at the intersection of the three circles. This region is denoted $I(x:y:t)$ and can be defined as $I(x:y) - I(x:y\mid t)$ (there are also some alternative definitions, which are equivalent up to an additive $O(\log n)$ term; see Lemma~\ref{l:formulas}). The problem is that $I(x:y:t)$ can be negative. Romashchenko and Zimand~\cite{rom-zim:c:mutualinfo} have shown that if $t$ is a computable function of $x$ and $y$ and if furthermore this function has the ``rectangle property"  stating that $t(x_1, y_1) = t(x_2, y_2) = t$ implies $t(x_1, y_2) = t(x_2, y_1) = t$, then actually $I(x:y:t)$ \emph{is} positive up to $O(\log n)$ precision, where $n = \max(|x|,|y|)$.

\begin{theorem}[~\cite{rom-zim:c:mutualinfo}]
\label{t:ineq}
For every computable $t$ with the rectangle property, for every pair of $n$-bit strings $(x,y)$,  
\begin{equation}
\label{e:rz}
I(x:y:t) \geqp 0,
\end{equation}
where $t$ denotes $t(x,y)$, and $\geqp$ hides a loss of precision bounded by $O(\log n)$.
\end{theorem}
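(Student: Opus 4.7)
The plan is to reduce the theorem to the equivalent inequality
\[
C(t \mid x) + C(t \mid y) \;\leqp\; C(t) \qquad (\star)
\]
and then to prove $(\star)$ using the rectangle property through a Shannon--Fano coding argument. The reduction is standard: because $t$ is computable from $(x,y)$ we have $I(x,y:t) \eqp C(t)$, and the triple-information identity $I(x:t) + I(y:t) = I(x,y:t) + I(x:y:t)$ together with Kolmogorov symmetry of information $C(t\mid x) \eqp C(t) - I(x:t)$ (and its analogue for $y$) yields $I(x:y:t) \eqp C(t) - C(t \mid x) - C(t \mid y)$, so inequality~\eqref{e:rz} is equivalent to $(\star)$.

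For $(\star)$, I would fix $x$ and set $T_x = \{t' : x \in A_{t'}\}$. The rectangle property forces the family $\{B_{t'} : t' \in T_x\}$ to be a partition of $\{0,1\}^n$, so $q(t' \mid x) := |B_{t'}|/2^n$ is a computable probability distribution on $T_x$. The Shannon--Fano code it generates encodes $t$ in $\log(2^n/|B_t|) + O(\log n)$ bits, yielding $C(t \mid x) \leqp n - \log|B_t|$, and by the symmetric argument on the column we obtain $C(t \mid y) \leqp n - \log|A_t|$. Summing gives the preliminary estimate $C(t\mid x) + C(t\mid y) \leqp 2n - \log(|A_t|\cdot|B_t|)$.

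The main obstacle is to promote this estimate to $C(t) + O(\log n)$: the two quantities coincide \emph{on average} over a uniformly random pair in the rectangle (this is the Shannon identity $H(T\mid X) + H(T\mid Y) = H(T)$ valid for any rectangle partition under uniform $(X,Y)$), but can differ pointwise on a non-random $(x,y)$. To handle arbitrary pairs, I would sharpen each Shannon--Fano code above by mixing its distribution with the universal a priori distribution $2^{-C(t')}/Z$ on rectangle labels, which gives the refined bound $C(t \mid x) \leqp \min\bigl(n - \log|B_t|,\ C(t)\bigr)$ and analogously for $C(t \mid y)$. A case analysis of which alternative in each minimum is active for the specific $(x,y)$, combined with the area identity $\sum_{t'} |A_{t'}| \cdot |B_{t'}| = 2^{2n}$ restricted to the subfamily of rectangles of complexity at most $C(t)$, then closes the sum at $C(t) + O(\log n)$ and thereby establishes $(\star)$.
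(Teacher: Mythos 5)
Your reduction to the inequality
\[
C(t \mid x) + C(t \mid y) \;\leqp\; C(t) \qquad (\star)
\]
is correct (it is Lemma~\ref{l:formulas}(2) with $C(t\mid x,y) = O(1)$), and the Shannon--Fano bounds $C(t\mid x) \leqp n - \log|B_t|$ and $C(t\mid y) \leqp n - \log|A_t|$ are both valid. The gap is in the closing step. Summing the two Shannon--Fano bounds gives $2n - \log(|A_t|\,|B_t|)$, but the very same cell-level Shannon--Fano argument shows $C(t) \leqp 2n - \log(|A_t|\,|B_t|)$, so your combined estimate is an \emph{upper} bound on $C(t)$, not a lower one, and it can exceed $C(t)$ by $\Theta(n)$. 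Mixing with the universal prior improves each term to $\min(n-\log|B_t|,\,C(t))$, but this is only an upper bound on each conditional complexity independently; the actual values of $C(t\mid x)$ and $C(t\mid y)$ can simultaneously be far below both alternatives, because $x$ and $y$ can share information about which rectangle they lie in. Concretely, let $t(x,y)$ output the pair $(a,b)$ of $k$-bit prefixes of $x$ and $y$, and take $x = a\,0^{n-k}$, $y = b\,0^{n-k}$ with $a$ random and $b$ a simple transformation of $a$ (so $C(b\mid a) = O(1)$). Then $C(t) \eqp k$, the Shannon--Fano bounds are $k$ each (sum $2k$), the universal mixture gives $\min(k,k) = k$ each (sum still $2k$), yet the truth is $C(t\mid x) + C(t\mid y) = O(1)$. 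So neither bound certifies $(\star)$, and no ``which-alternative-is-active'' case analysis on the upper bounds can, since the upper bounds themselves are loose exactly when $x$ and $y$ are correlated inside the rectangle. The proposed area identity restricted to low-complexity rectangles does not repair this either: it constrains the multiset of rectangle areas, not the relationship between a specific rectangle and the specific pair $(x,y)$ sitting in it.

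The paper closes the gap with a genuinely different device: the \emph{clone} argument. One counts the pairs $(x',y')$ lying in the same rectangle as $(x,y)$ with $C(x') \le C(x)$ and $C(y') \le C(y)$, shows there are at least $2^{C(x\mid t) + C(y\mid t) - O(\log n)}$ of them, and picks the pair maximizing $C(x',y'\mid t)$. The chain rule then yields $C(x',y',t) \geqp C(t) + C(x\mid t) + C(y\mid t)$, while the clone bound plus $C(t\mid x',y') = O(1)$ (for a partition) gives $C(x',y',t) \leqp C(x)+C(y)$. Combining these two estimates is what establishes $(\star)$; it is precisely the coupling between the $x$-side and $y$-side description costs that the separate coding arguments cannot see.
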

\smallskip

A related result has been obtained by Kaced, Romashchenko and Vereshchagin~\cite{ka-rom-ver:j:conditional}.  It uses Shannon entropy instead of Kolmogorov complexity, and it was an inspiration for~\cite{rom-zim:c:mutualinfo}.

The inequality in Theorem~\ref{t:ineq} is particularly interesting in communication complexity because in this theory the rectangle property plays a prominent role.
 There are various models in communication complexity, the most basic one being the two-party model introduced in 1979 by Yao~\cite{yao:c:comm}. Alice and Bob want to compute a function $f(x,y)$ of two arguments, but Alice receives only $x$, and Bob receives only $y$.  To achieve their goal, they run an interactive protocol $\Pi$ (i.e., they exchange messages in several rounds, where each party computes the current message to be sent from his/her input and the previous messages) which allows them at the end to compute $f(x,y)$.  The string which encodes in some canonical way the sequence of messages exchanged by Alice and Bob on input strings $(x,y)$ is denoted $t(x,y)$, and is called the \emph{transcript} of the protocol.  The key observation is that if on  input pairs $(x_1, y_1)$ and $(x_2, y_2)$ the transcript of the protocol is the same string $t$, then the transcript on input $(x_1, y_2)$ will be $t$ as well, and therefore, the transcript function has the rectangle property.

  The above observation (which is standard in communication complexity, see~\cite{kus-nis:b:commcomplexity}) shows that a protocol induces a partition of the domain of inputs into rectangles, where each part (i.e., rectangle)  $R$ of the partition corresponds to a fixed value $t$ of the transcript, via the definition $R = \{(x,y) \mid t(x,y) = t \}$. Suppose now that the protocol  allows Alice and Bob to compute $f(x,y)$. Then we can think that the domain of $f$ is formed by the \emph{cells} $(x,y)$ and each cell is colored with the color $f(x,y)$. The rectangles of the partition induced by the protocol are  \emph{$f$-monochromatic}, because all the cells $(x,y)$ that have the same transcript $t(x,y) = t$ have the same color $f$.

Thus, every \emph{deterministic} interactive protocol induces a partition of the domain into monochormatic rectangles. But not every monochromatic partition into rectangles corresponds to an interactive deterministc protocol and, moreover,
in some applications, one analyzes  \emph{coverings} of the domain of $f$ with monochromatic rectangles that can overlap (so the rectangles are not necessarily a partition of the domain). Such coverings have extremely interesting applications (for example, the breakthrough result of Fiorini et. al.~\cite{fio-mas-pok-tiw-wol:j:polytope} uses them), and  characterize \emph{nondeterministic} communication complexity.
Given a function $f$ of two arguments, it is of interest to determine the minimum number of $f$-monochromatic rectangles that cover the domain of $f$. The logarithm of this number is the nondeterministic communication complexity of $f$.  (We note that the standard definition of nondeterministic communication complexity is for boolean functions -- see~\cite{kus-nis:b:commcomplexity}, Def 2.3 -- and, for this class of functions, the definition is in terms of coverings with $1$-monochromatic rectangles. In this work   we focus mainly on non-boolean functions, for which  the above definition is appropriate).

This is a ``worst-case" type of complexity, but using the framework of Kolmogorov complexity adopted in this work, we can talk about  the communication complexity for each individual  input. A nondeterministic communication protocol $\Pi$ can be described combinatorially,  but also as an interactive computational  procedure. In the combinatorial view, $\Pi$ is simply a covering of $f$ with $f$-monochromatic rectangles. In the procedural view, besides Alice and Bob, there is a third party, the Prover, also known as Merlin. Merlin knows both $x$ and $y$ (where $x$ is the input possessed by Alice, and $y$ is the input possessed by Bob).   Merlin sends Alice and Bob  a description of $t(x,y)$, which is one of the rectangles in the covering specified by  $\Pi$ that contains the cell $(x,y)$. Alice checks that $x$ is a row of $t(x,y)$ and Bob checks that $y$ is a column of $t(x,y)$, and, if both parties confirm that both checks were valid (which requires only two bits of communication), Alice and Bob  derive the coveted $f(x,y)$ which is just the color of $t(x,y)$.  

Thus, it is natural to define the \emph{individual} communication complexity of the protocol $\Pi$ on input $(x,y)$ to be $C(t(x,y)  \mid \Pi)$, \ie, the length of the shortest description of $t(x,y)$, given the rectangles of the protocol.
\medskip

\textbf{Our contributions.}  The center piece of this paper is an extension of Theorem~\ref{t:ineq}, which is applicable to rectangle covers.  The setting is as follows: ${\cal R}$ is a set of rectangles, and $\rho$, called the \emph{thickness} of ${\cal R}$, is the maximum number of rectangles in ${\cal R}$ that overlap. A pair $(x,y)$ is covered by ${\cal R}$ if  $(x,y)$ belongs to some rectangle of ${\cal R}$.
\begin{theorem}[Main Inequality, informal and simplified form: the full version is in Theorem~\ref{t:extension}]
\label{t:mainineq}
If $(x,y)$ is covered by ${\cal R}$ and $t$ is a rectangle of ${\cal R}$ containing $(x,y)$, then 
\begin{equation}
\label{e:mainineq}
I(x:y:t \mid {\cal R}) + \log \rho \geqp 0.
\end{equation}
\end{theorem}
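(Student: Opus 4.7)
My plan is to reduce the covering inequality to the partition case (Theorem~\ref{t:ineq}) via an auxiliary ``atom partition'' of the covered cells, while paying $\log \rho$ bits for the ambiguity in the choice of $t$ among the rectangles of $\mathcal{R}$ through $(x,y)$.

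First, I would build the atom partition. For each covered $(x,y)$, define its atom $A(x,y)$ to be the set of all cells sharing the same incidence pattern to $\mathcal{R}$, i.e.\ lying in precisely the same rectangles of $\mathcal{R}$ as $(x,y)$. As an intersection of combinatorial rectangles and complements of combinatorial rectangles, each atom is itself a combinatorial rectangle. More importantly, the atom is parametrised by the pair $(\tau_A(x),\tau_B(y))$, where $\tau_A(x) = \{R \in \mathcal{R}\,:\,x\in\mathrm{rows}(R)\}$ and $\tau_B(y)$ is defined analogously. Since the atom depends on $(x,y)$ through a function of $x$ alone and a function of $y$ alone, the partition has the rectangle property, and Theorem~\ref{t:ineq}, conditioned on $\mathcal{R}$, yields
$$I(x:y:A \mid \mathcal{R}) \geqp 0.$$

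Next, I would convert this into a statement about $t$ using the chain rule for triple mutual information:
$$I(x:y:t \mid \mathcal{R}) \eqp I(x:y:A \mid \mathcal{R}) + I(x:y:t \mid A,\mathcal{R}) - I(x:y:A \mid t,\mathcal{R}).$$
The first summand is $\geqp 0$ by the previous step. For the second, observe that any atom is contained in at most $\rho$ rectangles of $\mathcal{R}$, of which $t$ is one, so every conditional complexity $C(t\mid A,\cdots)$ defining $I(x:y:t\mid A,\mathcal{R})$ lies in $[0,\log\rho]$. The key structural observation here is that, given $A$ and $\mathcal{R}$, the rectangle $t$ is drawn from a list of $\leq \rho$ candidates that does not depend on the particular cell $(x,y)$ inside $A$; intuitively, $t$ is essentially independent of $(x,y)$ given $(A,\mathcal{R})$, so this term should be nearly zero, and in any case $\geqp -O(\log\rho)$ from the absolute bound.

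The main obstacle is controlling the third term, $I(x:y:A\mid t,\mathcal{R})$. A priori, $A$ carries strictly more information than $t$, since it records the full profiles $(\tau_A(x),\tau_B(y))$ while $t$ only reveals one rectangle containing them, so naive bounds are insufficient. The hope is to exploit that the extra information in $A$ beyond what $t$ already encodes splits into an $x$-only piece ($\tau_A(x)$, known up to $\leq\rho$ choices once $t$ is fixed) and a $y$-only piece ($\tau_B(y)$, likewise), so that its triple mutual information with $x$ and $y$ is small. Making this rigorous, and extracting exactly one logarithm of $\rho$ rather than a larger constant multiple, is the crux of the argument. An alternative, and perhaps cleaner, route would be to bypass the atom partition altogether and adapt the combinatorial compression argument underlying Theorem~\ref{t:ineq} directly to the covering setting, where the $\log\rho$ loss appears naturally when enumerating, for each cell, the at most $\rho$ rectangles of $\mathcal{R}$ containing it.
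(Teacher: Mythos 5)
Your atom-partition idea is a natural attempt to piggyback on Theorem~\ref{t:ineq}, but as you present it there is a genuine gap, and you correctly point at it yourself. Two remarks.

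First, a small definitional fix: ``same incidence pattern'' (same set of rectangles containing the cell) does \emph{not} yield combinatorial rectangles, since the complement of a rectangle need not be a rectangle; two cells $(x_1,y_1),(x_2,y_2)$ can be in exactly $\{R_1\}$ while $(x_1,y_2)$ is in $\{R_1,R_2\}$. What \emph{does} yield a rectangle, and is what you actually use, is the finer equivalence where $A(x,y)$ is determined by the pair of profiles $(\tau_A(x),\tau_B(y))$; that function has the rectangle property, so the first summand $I(x:y:A\mid\mathcal R)\geqp 0$ is indeed safe via a relativized Theorem~\ref{t:ineq}.

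The real obstruction is the third term $I(x:y:A\mid t,\mathcal R)$ in your chain-rule decomposition, and you do not have a bound for it. Nothing forces this quantity to be small: you need an \emph{upper} bound on it (so that $-I(x:y:A\mid t,\mathcal R)\geqp -\log\rho$), whereas the only structural fact available about $A$ given $(t,\mathcal R)$ is again a rectangle property, which would give you the wrong-direction inequality $I(x:y:A\mid t,\mathcal R)\geqp 0$. The intuition that ``$A$ beyond $t$ splits into an $x$-only and a $y$-only piece'' would, if made rigorous, only control $C(A\mid x,t,\mathcal R)$ and $C(A\mid y,t,\mathcal R)$, but the sign pattern in $I(x:y:A\mid t,\mathcal R)$ mixes these with $C(A\mid t,\mathcal R)$, and the cancellation you hope for is not automatic. (Your second term, $I(x:y:t\mid A,\mathcal R)\geqp -\log\rho$, is fine once you write $I(x:y:t\mid A,\mathcal R)=I(x:t\mid A,\mathcal R)-I(x:t\mid y,A,\mathcal R)$ and use $I(x:t\mid y,A,\mathcal R)\leq C(t\mid A,\mathcal R)\leq\log\rho+O(1)$.)

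The ``alternative, and perhaps cleaner, route'' you sketch at the end is in fact exactly the paper's proof of Theorem~\ref{t:extension}: repeat the clone argument of~\cite{rom-zim:c:mutualinfo} verbatim, defining clones of $x$ (resp.\ $y$) to be rows (resp.\ columns) of the chosen rectangle $\ttt$ with no larger complexity, and let $(x',y')$ maximize $C(x',y'\mid\ttt,\Pi)$ over the product of clone sets. The only place the argument changes is the step $C(x',y'\mid\Pi)\geqp C(x',y',\ttt\mid\Pi)-\log\rho(\ttt)$, which follows because $(x',y')$ lies in at most $\rho(\ttt)$ rectangles of the covering, so $C(\ttt\mid x',y',\Pi)\leq\log\rho(\ttt)+O(1)$; in the partition case this term was $O(1)$. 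You should develop that route; the atom detour introduces the uncontrolled third term and should be abandoned.
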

The proof of Theorem~\ref{t:mainineq} is essentially the same as the proof of Theorem~\ref{t:ineq}, but we believe that the new inequality~\eqref{e:mainineq} deserves attention because it is applicable to the communication complexity of nondeterministic protocols, and also of Arthur-Merlin (AM) protocols, which combine nondeterminism with randomness. This is intriguing because currently there is a lack of techniques  for proving communication complexity lower bounds for AM protocols, and, consequently, finding AM-complexity lower bounds for any explicit function is a notoriously challenging open question in communication complexity~\cite{cha-tha:c:amcomm,goo:-pit-wat:j:unambig-am,goo-pit-wat:j:landscape,gav:t:am-communication}.  

Because of the $\log \rho$ term, the inequality~\eqref{e:mainineq} is meaningful only for nondeterministic and AM protocols with rectangle coverings having bounded thickness.  Of course this is a limitation, but communication protocols with small thickness have their merits and have been studied since long with various monikers for the thickness parameter  such as ``few witnesses" or  ``limited number of advice bits" \cite{kar-new-sak-wig:j:nondetcomm,hromkovic1996nondeterministic,hromkovivc2000tradeoffs,gro-tar:j:nondetcomm}. Yannakakis~\cite{yan:j:opt-communication} shows that communication-efficient nondeterministic protocols with thickness $1$ (these are called \emph{unambiguos nondeterministic protocols})  can be used to express certain combinatorial optimization problems as linear programs with few variables.
Furthermore, recent works of G\"{o}\"{o}s et. al.~\cite{goo:-pit-wat:j:unambig-am} and Gavinsky~\cite{gav:t:am-communication} highlight the importance of the thickness parameter in studying the complexity of AM communication protocols and analyze several models of AM-like protocols with thickness bounded in various ways. 
We believe that the information-theoretic inequality~\eqref{e:mainineq} can contribute to this research line and, more generally, to the understanding of AM protocols.  In Section~\ref{s:am}, we derive almost directly from a variant of~\eqref{e:mainineq}, a lower bound for the communication cost of AM protocols, albeit for the easier case of a non-boolean function. We show that any AM protocol that computes $x+y$ (bitwise XOR) must have communication cost $2n - \log(\rho)$, where $\rho$ is the thickness of the protocol.  

In Section~\ref{s:multiparty}, we present the analog of inequality~\eqref{e:mainineq} for multiparty communication protocols and give an application.

The \emph{information complexity} of a $2$-party interactive protocol (\cite{cha-shi-wir-yao:c:infcompl,bar-bra-che-rao:c:infcompl}) measures the amount of information each party learns about the other party's input during the execution of the protocol. Information complexity has turned out to be a very useful concept in communication complexity (for example, see the survey paper~\cite{wei:j:infcomplexity}).  Intuitively, the information complexity should not be larger than the communication complexity, because each bit of the transcript carries at most one bit of information. This relation has been proven to hold by Braverman and Rao~\cite{bra-rao:c:infcomplexity} in the case of \emph{randomized} protocols, but it is natural to ponder about the relation between information complexity and communication complexity for \emph{nondeterministic} protocols, where the  intuitive view is less clear. In Section~\ref{s:infcomplexity}, we consider the Kolmogorov complexity version of \emph{information complexity}.  Relying again on the  information-theoretical inequality~\eqref{e:mainineq}, we  show that in the case of a nondeterministic protocol $\Pi$, the information complexity of $\Pi$ is at most the communication complexity of $\Pi$ plus the logarithm of the thickness of $\Pi$, up to logarithmic precision.

\section{Prerequisites, notation, and some useful lemmas}
\label{s:basics}
We assume familiarity with the basics of Kolmogorov complexity theory. We use standard notation in Kolmogorov complexity.  We use $x, y, u, p,$ etc.  to denote finite binary strings, $|x|$ denotes the length of  string $x$. We fix a universal Turing machine $U$, and we say that $p$ is a program (or description) of $x$ conditioned on $u$ if $U$ on input $(p,u)$ prints $x$.  The Kolmogorov complexity of $x$ conditional on $u$ is $C(x \mid u) = \min\{|p| \mid p \mbox{ is a program for $x$ given $u$} \}$.  If $u$ is the empty string, we write $C(x)$ instead of $C(x \mid u)$.  

We use $\eqp, \geqp, \leqp$ to denote $=, \geq, \leq$ with a loss of precision bounded by $O(\log n)$, where  the constant hidden in the $O(\cdot)$ notation depends only on the universal Turing machine used in the definition of Kolmogorov complexity. The parameter $n$ is defined in the context, and, by default, is the maximum length of the strings involved in the relation.

Throughout the paper we use the notation $\ttt \stackrel{def.}{=} t(x,y)$ and $\fff \stackrel{def.}{=} f(x,y)$. 

The Kolmogorov-Levin theorem shows the validity of the chain rule $C(x,y) \eqp C(x) + C(y \mid x)$ (here $n$ hidden in the $\eqp$ notation, is $\max(|x|,|y|)$, as explained above).
 
 The mutual information of $x$ and $y$ conditioned on $u$ is denoted $I(x : y \mid u)$ and is defined by
$I(x : y \mid u) = C(x \mid u) + C(y \mid u) - C(x,y \mid u)$. In case $u$ is the empty string, we simply write $I(x:y)$. For every strings $x,y,z,u$, it holds that $I(x,z:y \mid u) \eqp I(z:y \mid u) + I(x:y\mid z,u)$ (the chain rule for mutual information).

\begin{lemma}
\label{l:formulas}
Let $I(u:v:w) = C(u) + C(v) + C(w) - C(u,v) - C(u,w) - C(v,w) + C(u,v,w)$.
Then
\begin{itemize} 
\item[(1)] $I(u:v:w)  \eqp I(u:v) - I(u:v \mid w)$.
\item[(2)]  $I(u:v:w) \eqp C(w) - C(w \mid u) - C(w \mid v)  + C(w \mid u,v)$.
\end{itemize}
\end{lemma}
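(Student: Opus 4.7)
The plan is to prove both identities by a straightforward algebraic expansion, reducing each side to the symmetric expression $C(u)+C(v)+C(w)-C(u,v)-C(u,w)-C(v,w)+C(u,v,w)$ that defines $I(u:v:w)$. The only nontrivial ingredient is the Kolmogorov--Levin chain rule $C(a\mid b)\eqp C(a,b)-C(b)$, which is responsible for the additive $O(\log n)$ slack hidden in each $\eqp$; no deeper fact about Kolmogorov complexity is needed.

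For part~(1), I would first unfold the two unconditional mutual informations by their definitions, writing
\[
I(u:v) - I(u:v\mid w) = \bigl[C(u)+C(v)-C(u,v)\bigr] - \bigl[C(u\mid w)+C(v\mid w)-C(u,v\mid w)\bigr].
\]
Then I would apply the chain rule three times to eliminate the conditional terms on the right, namely $C(u\mid w)\eqp C(u,w)-C(w)$, $C(v\mid w)\eqp C(v,w)-C(w)$, and $C(u,v\mid w)\eqp C(u,v,w)-C(w)$. After substitution, the three copies of $-C(w)$ combine to leave a single $+C(w)$, and the resulting expression matches the defining formula for $I(u:v:w)$ term by term.

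For part~(2), the same strategy applies: start from the right-hand side $C(w)-C(w\mid u)-C(w\mid v)+C(w\mid u,v)$, and use the chain rule in the alternative direction $C(w\mid u)\eqp C(u,w)-C(u)$, $C(w\mid v)\eqp C(v,w)-C(v)$, $C(w\mid u,v)\eqp C(u,v,w)-C(u,v)$. Collecting the resulting seven terms reproduces exactly the symmetric sum defining $I(u:v:w)$.

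I do not anticipate any real obstacle: the identities are analogues of standard Shannon-entropy identities, and the only place precision is lost is in each application of the chain rule, giving a total loss of $O(\log n)$ as claimed. I would simply note at the end that, since a constant number of chain-rule applications are used and each contributes an $O(\log n)$ additive error, the cumulative error remains $O(\log n)$, consistent with the convention fixed in the paragraph introducing $\eqp$.
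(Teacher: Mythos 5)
Your proposal is correct and matches the paper's approach exactly: the paper's proof reads simply ``Simple manipulations using the chain rule,'' and the chain-rule expansions you spell out are precisely the intended manipulations. Nothing more to add.
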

\begin{proof}
Simple manipulations using the chain rule.
\end{proof}

\subsection{Nondeterministic protocols}
A rectangle $R$ is a set (usually a subset in $\{0,1\}^{n_1}\times \{0,1\}^{n_2}$) of the form $S \times T$.  We say that the set $S$ is the set of \emph{rows} of $R$, and $T$ is the set of \emph{columns} of $R$. 

 We  define  a nondeterministic communication protocol that allows two parties to compute a function $f(x,y)$, as being a covering $\rt$ with rectangles of the domain of $f$ together with a function $t(x,y)$ that selects one of  the rectangles containing $(x,y)$. The formal definition is as follows.
\begin{definition}
\label{d:protocol} \hfill
\begin{enumerate}
\item A nondeterministic communication protocol $\Pi$  for a function with two arguments is a pair $(\rt, t)$, where $\rt$ and $t$ are as follows.

\item The domain of the protocol is a set $D$ of the form  $D = \zo^{n_1} \times \zo^{n_2}$, for some  positive integers $n_1$ and $n_2$ and we let $n$ denote $\max(n_1, n_2)$. We view $D$ as a table with $2^{n_1}$ rows and $2^{n_2}$ columns.

\item $\rt$ is  a covering of $D$  with rectangles. That is  $\rt = \{R_1, \ldots, R_T\}$, for some natural number $T$, where each $R_i$ is a rectangle, and
$D \subseteq \bigcup R_i$.


\item $t$ is a function which on input $(x,y)$  returns one of the rectangles in $\rt$ that contains the cell $(x,y)$.  We think of $t(x,y)$ as being the transcript of the protocol on input $(x,y)$.

\item The communication complexity of the protocol $\Pi =(\rt, t)$ on input $(x,y)$ is $C(t(x,y)  \mid \Pi)$.
\end{enumerate}
\end{definition}
Henceforth, by ``protocol" we mean a nondeterministic communication protocol, unless specified otherwise. In the next definition we define the \emph{thickness} of a protocol, a parameter which quantifies how far from a partition is the covering induced by the protocol.
\begin{definition}
\label{d:thick}
Let $\Pi = (\rt, t)$ be a protocol. The \emph{thickness of a cell $(x,y)$}, denoted $\rho(\Pi_{x,y})$ is the number of rectangles in $\rt$ that contain the cell $(x,y)$.   
When the protocol $\Pi$ is clear from the context, we write more simply $\rho_{x,y}$. 
The \emph{thickness of the protocol}, denoted $\rho(\Pi)$,  is the maximum of the thickness of all cells in the domain $D$, \ie,  $\rho(\Pi) = \max_{(x,y) \in D} \rho(\Pi_{x,y})$.
The thickness of a rectangle $R$, denoted $\rho(R)$,  is $\max_{(x,y) \in R} \rho(\Pi_{x,y})$.
\end{definition}
We next define what it means for a protocol to compute a function exactly (\ie, on all inputs).
\begin{definition}
\label{d:protocolcompute} 
 A protocol $\Pi = (\rt, t)$ computes the function $f$ of two arguments over domain $D$, if every rectangle in $\rt$ is $f$-monochromatic, \ie, for every rectangle $R \in \rt$, there is $z$ (the ``color" of $R$) such that for all $(x,y) \in R$, $f(x,y) = z$.
\end{definition}

We also define what it means for a protocol to compute a function with some error (\ie, it may err on a small fraction of inputs). This case is useful in the definition of Arthur-Merlin protocols. In case the protocol makes mistakes on some inputs, we can no longer assume that the rectangles are monochromatic, and Alice and Bob use functions $g_A$ (respectively $g_B$) to compute the output from their input and the rectangle provided by Merlin.
\begin{definition}
\label{d:protocol-error} 
A protocol that computes the function $f$ with error  has the form $\Pi=(\rt, t, g_A, g_B)$  where
\begin{enumerate}
\item $\rt$ and $t$ are as in Definition~\ref{d:protocol}.
\item $g_A$ and $g_B$ are functions that map an input of Alice (respectively, an input of Bob) and a rectangle in $\rt$ into a element in the range of $f$.
\end{enumerate}
The protocol $\Pi$ computes $f$ with error $\epsilon$, if with probability $1-\epsilon$, $g_A(x,t(x,y)) = g_B(y, t(x,y)) = f(x,y)$, where the probability is over $(x,y)$ chosen uniformly at random in the domain $D$.
\end{definition}

\begin{lemma}
\label{l:mutualf}
If $\Pi= (\rt, t)$ is a protocol  that computes the computable function $f$, then for every $(x,y)$:
\begin{itemize}
\item[(1)] $I(x:y:\fff \mid \Pi) \eqp C(\fff \mid \Pi) - C(\fff \mid x, \Pi) - C(\fff \mid y, \Pi)$.
\item[(2)]  $I(x:y:\ttt \mid \Pi) \eqp C(\ttt \mid \Pi) - C(\ttt \mid x, \Pi) - C(\ttt \mid y, \Pi)$.
\end{itemize}
\end{lemma}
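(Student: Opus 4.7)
The plan is to start from Lemma~\ref{l:formulas}(2), applied with everything conditioned on $\Pi$. That identity, in its relativized form, gives, for any string $w$,
\[
I(x:y:w \mid \Pi) \;\eqp\; C(w \mid \Pi) - C(w \mid x, \Pi) - C(w \mid y, \Pi) + C(w \mid x, y, \Pi).
\]
Comparing this with the right-hand side of the lemma, the only thing that has to be shown is that the last term $C(w \mid x, y, \Pi)$ is negligible (i.e., bounded by a constant, hence absorbed by $\eqp$), both for $w = \ttt$ and for $w = \fff$.

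For part (2), recall that $\Pi = (\rt, t)$, so a description of $\Pi$ contains a description of the function $t$. Hence, a constant-size machine that receives $x$, $y$, and $\Pi$ can first recover $t$ from $\Pi$, then simulate $t$ on $(x,y)$ to output $\ttt = t(x,y)$. This gives $C(\ttt \mid x, y, \Pi) = O(1)$, which is swallowed by the $O(\log n)$ slack.

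For part (1), the argument is even simpler: since $f$ is a computable function by hypothesis, there is a constant-size program that on input $(x,y)$ outputs $f(x,y) = \fff$ without using $\Pi$ at all; hence $C(\fff \mid x, y, \Pi) \leq C(\fff \mid x, y) = O(1)$.

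There is really no significant obstacle: the only small point to keep in mind is that Lemma~\ref{l:formulas} is written unconditionally, so one has to observe (or briefly justify) that the same symmetric Kolmogorov identity holds when every complexity is conditioned on an extra string $\Pi$ (this follows from the conditional chain rule $C(a,b \mid z) \eqp C(a \mid z) + C(b \mid a, z)$, used exactly as in the unconditional proof of Lemma~\ref{l:formulas}). Once that is noted, both (1) and (2) follow immediately.
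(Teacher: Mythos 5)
Your proposal is correct and follows exactly the paper's own argument: apply Lemma~\ref{l:formulas}(2) relativized with $\Pi$ and observe that $C(\fff \mid x, y, \Pi) = O(1)$ and $C(\ttt \mid x, y, \Pi) = O(1)$. You simply spell out the justifications for these two $O(1)$ bounds and the relativization step, which the paper leaves implicit.
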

\begin{proof}
Both statements follow from Lemma~\ref{l:formulas} (2)  (relativized with $\Pi)$ taking into account that $C(\fff \mid x, y, \Pi) = O(1)$ and $C(\ttt \mid x, y, \Pi) = O(1)$.
\end{proof}


The above definitions can be extended for the case of protocols that compute relations. The idea is that Alice having $x$ and Bob having $y$ want to compute some value $z$ such that $(x,y,z)$ is in some given relation.

Let us consider a relation $\Gamma \subseteq D_1 \times D_2 \times D_3$.  We assume that for every $(x,y) \in D_1 \times D_2$ there is some $z \in D_3$ such that $(x,y,z) \in \Gamma$. A $\Gamma$-monochromatic rectangle is a rectangle $R \subseteq D_1 \times D_2$ such that there exists $z$ with the property that $(x,y,z) \in \Gamma$ for all cells $(x,y) \in R$. The smallest such $z$ (in some predefined linear order of $D_3$) is the \emph{color} of $R$. 
\begin{definition}
\label{d:protocolrelation} \hfill
\begin{enumerate}
\item A protocol $\Pi= (\rt, t)$ computes the relation $\Gamma$, if every rectangle in $\rt$ is $\Gamma$-monochromatic.

\item A protocol $\Pi= (\rt, t, g_A, g_B)$  (see  Definition~\ref{d:protocolcompute}(2)) computes the relation $\Gamma$ with error $\epsilon$, if with probability $1-\epsilon$,  $g_A(x, t(x,y)) = g_B(y, t(x,y)) = \stackrel{def.}{z}$ and $(x,y, z) \in \Gamma$, where the probability is over $(x,y)$ chosen uniformly at random in the domain $D$.
\end{enumerate}
\end{definition}

\section{The main inequality}
\label{s:main}

The following result is an extension of Theorem~\ref{t:ineq}. Its proof follows closely the proof of Theorem~\ref{t:ineq} 
from~\cite{rom-zim:c:mutualinfo}.

\begin{theorem}
\label{t:extension}
For every protocol $\Pi=(\rt, t)$, and for every $(x,y) \in D$, 
\begin{equation}
\label{e:eqmain}
I(x:y \mid \Pi) \geqp I(x:y \mid \ttt, \Pi) - \log \rho(\ttt).
\end{equation}
\end{theorem}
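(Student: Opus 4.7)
The plan is to adapt the proof of Theorem~\ref{t:ineq} from~\cite{rom-zim:c:mutualinfo}, tracking where the partition hypothesis is used and showing that replacing it by a covering of thickness $\rho(\ttt)$ costs at most $\log \rho(\ttt)$ bits in the final Kolmogorov complexity estimate.

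First, I would reduce~\eqref{e:eqmain} to a clean Kolmogorov complexity inequality. By Lemma~\ref{l:formulas}(1) relativized to $\Pi$,
\[
I(x:y \mid \Pi) - I(x:y \mid \ttt, \Pi) \eqp I(x:y:\ttt \mid \Pi),
\]
so it suffices to show $I(x:y:\ttt \mid \Pi) \geqp -\log \rho(\ttt)$. Applying Lemma~\ref{l:formulas}(2) (also relativized) and using that $\ttt$ is computable from $(x,y)$ given $\Pi$, so $C(\ttt \mid x, y, \Pi) = O(1)$, the target simplifies to
\[
C(\ttt \mid x, \Pi) + C(\ttt \mid y, \Pi) \leqp C(\ttt \mid \Pi) + \log \rho(\ttt).
\]

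Second, I would establish this inequality by replaying the combinatorial argument of~\cite{rom-zim:c:mutualinfo}. The strategy there is to produce a short program for $\ttt$ (given $\Pi$) by merging the information about $\ttt$ that is available from $x$ alone with the information available from $y$ alone. Concretely, for the complexity levels $a \eqp C(\ttt \mid x, \Pi)$ and $b \eqp C(\ttt \mid y, \Pi)$, one enumerates the family $\mathcal{F}$ of rectangles $R \in \mathcal{R}$ such that $C(R \mid x, \Pi) \leq a$ and $C(R \mid y, \Pi) \leq b$, and bounds $|\mathcal{F}|$ by a counting/hashing argument. In the partition setting, the cell $(x,y)$ belongs to a unique rectangle of $\mathcal{R}$, which lets one pinpoint $\ttt$ inside $\mathcal{F}$ by exhibiting $(x,y)$. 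In the covering setting, $(x,y)$ still narrows $\ttt$ down to at most $\rho(\ttt)$ rectangles of $\mathcal{F}$, so $\ttt$ can be singled out with an extra $\log \rho(\ttt)$-bit pointer; this is precisely the origin of the new term.

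The main obstacle is to verify that no step of the original argument secretly exploits pairwise disjointness of rectangles beyond this single disambiguation step. In particular, one must check that the counting/hashing estimate bounding $|\mathcal{F}|$, which in the partition case rests on the fact that each cell of $D$ contributes to at most one rectangle, degrades by a factor of at most $\rho(\ttt)$ when the rectangles overlap. Once this is confirmed, the remainder of the proof of Theorem~\ref{t:ineq} transfers with the obvious modification and yields~\eqref{e:eqmain}.
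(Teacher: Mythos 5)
Your reduction is correct: by Lemma~\ref{l:formulas}(2) relativized to $\Pi$ and $C(\ttt\mid x,y,\Pi)=O(1)$, the claim is equivalent to $C(\ttt\mid x,\Pi)+C(\ttt\mid y,\Pi)\leqp C(\ttt\mid\Pi)+\log\rho(\ttt)$, and you also correctly locate the source of the $\log\rho(\ttt)$ term (a pointer singling out $\ttt$ among the rectangles containing a fixed cell). However, your description of the combinatorial core is not the argument in~\cite{rom-zim:c:mutualinfo}, and as written it does not stand on its own. You propose to enumerate a family $\mathcal{F}$ of \emph{rectangles} $R$ with $C(R\mid x,\Pi)\le a$ and $C(R\mid y,\Pi)\le b$ and ``pinpoint $\ttt$ inside $\mathcal{F}$ by exhibiting $(x,y)$''; but that family is defined via conditions on $x$ and $y$, so enumerating it (or pointing into it) presupposes access to $x$ and $y$ --- and once $x,y,\Pi$ are given, $\ttt=t(x,y)$ is already determined with $O(1)$ bits, so no nontrivial bound comes out. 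There is no ``counting/hashing estimate bounding $|\mathcal{F}|$'' in the original proof, and in particular partition-hood is never used to bound a number of rectangles.

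What the paper actually does is a \emph{clones} argument on the inputs, not on the rectangles. Define $\mathrm{Clones}_x$ to be the rows $x'$ of the fixed rectangle $\ttt$ with $C(x'\mid\Pi)\le C(x\mid\Pi)$, and similarly $\mathrm{Clones}_y$ for columns. Since the clones of $x$ can be enumerated from $\ttt,\Pi,C(x\mid\Pi)$ and $x$ is itself a clone, one gets $|\mathrm{Clones}_x|\ge 2^{C(x\mid\ttt,\Pi)-O(\log n)}$, and likewise for $y$. Choosing $(x',y')\in\mathrm{Clones}_x\times\mathrm{Clones}_y$ of maximal $C(x',y'\mid\ttt,\Pi)$ gives $C(x',y'\mid\ttt,\Pi)\geqp C(x\mid\ttt,\Pi)+C(y\mid\ttt,\Pi)$, hence $C(x',y',\ttt\mid\Pi)\geqp C(\ttt\mid\Pi)+C(x\mid\ttt,\Pi)+C(y\mid\ttt,\Pi)$. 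On the other side, $C(x\mid\Pi)+C(y\mid\Pi)\ge C(x'\mid\Pi)+C(y'\mid\Pi)\ge C(x',y'\mid\Pi)\geqp C(x',y',\ttt\mid\Pi)-\log\rho(\ttt)$; the last step is the \emph{only} place the covering structure enters, because $(x',y')$ lies in $\ttt$ and hence in at most $\rho(\ttt)$ rectangles, so $C(\ttt\mid x',y',\Pi)\le\log\rho(\ttt)+O(1)$. Combining and subtracting $C(x,y,\ttt\mid\Pi)$ and $C(x,y\mid\Pi)$ yields~\eqref{e:eqmain}. So your instinct that the overlap is paid for only in a single disambiguation step is right, but the object being disambiguated is $\ttt$ given a clone pair $(x',y')$, and the enumeration that drives the bound is over clones of the inputs inside $\ttt$, not over rectangles with bounded complexity given $x$ or $y$.
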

\begin{remark}
\label{r:remproof}
Similar-looking inequalities have been used implicitly in other papers studying interactive protocols (for example,~\cite[Lemma 2.2]{ahl-csi:j:seckeyone},~\cite[Lemma 3.14]{bra-rao:c:infcomplexity}).  Their proofs have an inductive structure based on the rounds of communication. For instance, consider a $2$-round protocol where Alice sends message $t_1$ in Round $1$, and Bob sends message $t_2$ in Round $2$. We can show that $I(x:y) \geq I(x:y \mid t_1, t_2)$ as follows: 
\[
\begin{array}{lll}
I(x:y) & = I(x,t_1:y) & \mbox{(because $t_1$ is determined by $x$)} \\
 & \geq I(x : y \mid t_1) & \mbox{(chain rule and dropping a positive term)} \\
& =I(x : y,t_2 \mid t_1) & \mbox{(because $t_2$ is determined by $y$ and $t_1$)}\\
 &\geq I(x : y \mid t_1, t_2) &\mbox{(chain rule and dropping a positive term)}.
\end{array}
\]
This approach does not work for nondeterministic protocols, where Merlin's contribution to the transcript depends on both $x$ and $y$ and is delivered in ``one-shot,"  and not round-by-round. This is why we use a different method, using an idea from~\cite{rom-zim:c:mutualinfo}, where it was employed for a different reason.
\end{remark}
\begin{proof}
Let us fix $(x,y) \in D = \zo^{n_1} \times \zo^{n_2}$. We say that $x'$ is a clone of $x$ conditional on $\ttt$ if (i)   $x'$ is a row of the rectangle $\ttt$ and (ii) $C(x' \mid \Pi) \leq C(x \mid \Pi)$.  Similarly, we  say that $y'$ is a clone of $y$ conditional on $\ttt$ if (i) $y'$ is a column of the rectangle $\ttt$ and (ii) $C(y' \mid \Pi) \leq C(y \mid \Pi)$.  Let $\mathrm{Clones}_x$ denote the set of clones of $x$ (conditional on $\ttt$) and $\mathrm{Clones}_y$ denote the set of clones of $y$ (conditional on $\ttt$). Let $n$ be the maximum between the length of $x$ and the length of $y$.
\begin{claim}
$|\mathrm{Clones}_x| \geq 2^{C(x \mid \ttt, \Pi) - O(\log n)}$ and $|\mathrm{Clones}_y| \geq 2^{C(y \mid \ttt, \Pi) - O(\log n)}$.
\end{claim}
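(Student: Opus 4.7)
The plan is to prove each of the two inequalities by the standard ``counting via enumeration'' argument: we exhibit a short description of $x$ (respectively $y$) conditional on $\ttt$ and $\Pi$ by locating it inside $\mathrm{Clones}_x$ (resp. $\mathrm{Clones}_y$), and then invoke the lower bound $C(x \mid \ttt, \Pi)$ (resp. $C(y \mid \ttt, \Pi)$) on the length of any such description. By symmetry it suffices to handle $\mathrm{Clones}_x$.

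First I will observe that the set $\mathrm{Clones}_x$ is determined by the three parameters $\Pi$, $\ttt$, and the integer $k := C(x \mid \Pi)$. Given these three pieces of information one can semi-decide membership: enumerate all strings $x'$ that index a row of the rectangle $\ttt$ (this is decidable once $\ttt$ and $\Pi$ are known, since $\ttt$ is a rectangle in the covering $\rt$ named by $\Pi$), run the universal machine in parallel on all programs of length at most $k$, and include $x'$ in the enumeration as soon as some program of length $\leq k$ with oracle $\Pi$ halts and outputs $x'$. This produces a computable enumeration of $\mathrm{Clones}_x$ from $(\Pi, \ttt, k)$.

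Next, $x$ itself belongs to $\mathrm{Clones}_x$ (it is a row of $\ttt$ by hypothesis, and trivially $C(x\mid\Pi) \leq k$), so it has some index $j$ in the above enumeration with $j \leq |\mathrm{Clones}_x|$. Therefore $x$ can be reconstructed from $\ttt$ and $\Pi$ given the pair $(k, j)$: first generate $\mathrm{Clones}_x$ from $(\Pi, \ttt, k)$ and then output its $j$-th element. The index $j$ is encodable in $\lceil \log |\mathrm{Clones}_x|\rceil$ bits, and $k$, being an integer between $0$ and $n+O(1)$, is encodable in $O(\log n)$ bits in a self-delimiting way. This yields
\[
C(x \mid \ttt, \Pi) \leq \log |\mathrm{Clones}_x| + O(\log n),
\]
which rearranges to $|\mathrm{Clones}_x| \geq 2^{C(x\mid \ttt,\Pi) - O(\log n)}$. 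The same argument with rows and columns swapped gives the analogous bound for $|\mathrm{Clones}_y|$.

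I do not expect any real obstacle here: the argument is essentially the standard fact that an upper bound on the Kolmogorov complexity of a member of an r.e.\ set is bounded by the log-cardinality of that set, applied relative to the oracle $(\ttt, \Pi)$. The only thing to be careful about is packaging $k$ self-delimitingly so that its cost is absorbed into the $O(\log n)$ slack, which is routine.
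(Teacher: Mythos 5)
Your proof is correct and follows essentially the same route as the paper's: enumerate $\mathrm{Clones}_x$ from $(\Pi, \ttt, C(x\mid\Pi))$, note $x$ is in the enumeration, and bound $C(x\mid\ttt,\Pi)$ by the log of the set size plus the $O(\log n)$ cost of self-delimitingly encoding $C(x\mid\Pi)$. The only difference is that you spell out the enumeration mechanism (dovetailing over short programs) which the paper leaves implicit.
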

 \begin{proof} (of claim)
 Given $\ttt$, $\Pi$ and $C(x \mid \Pi)$ (with the observation that the latter can be written in a self-delimited way on $2 \log n$ bits), we can enumerate the clones of $x$. Since $x \in \mathrm{Clones}_x$, it follows that $C(x \mid \ttt, \Pi) \leq \log |\mathrm{Clones}_x| + 2 \log n$. Therefore
 \[
 |\mathrm{Clones}_x| \geq 2^{C(x \mid \ttt, \Pi) - 2\log n}.
 \]
 The other relation follows in the same way.
 \end{proof}
 Now let us take $(x',y')$ in $\mathrm{Clones_x} \times \mathrm{Clones_y}$ which maximizes $C(x',y' \mid \ttt, \Pi)$. Then
 \begin{equation}
 \label{e:eqt1}
 \begin{array}{ll}
 C(x',y' \mid \ttt, \Pi) &\geq \log |\mathrm{Clones_x} \times \mathrm{Clones_y}| \\
 &\geq C(x \mid \ttt, \Pi) + C(y \mid \ttt, \Pi) - O(\log n).
 \end{array}
 \end{equation}
 Next
 \begin{equation}
 \label{e:eqt2}
 \begin{array}{ll}
 C(x',y',\ttt  \mid \Pi) &\geqp C(\ttt \mid \Pi) + C(x',y' \mid \ttt, \Pi) \\
 &\geqp C(\ttt \mid \Pi) +  C(x \mid \ttt, \Pi) + C(y \mid \ttt, \Pi). \quad\quad\mbox{ (by~\eqref{e:eqt1})}
 \end{array}
 \end{equation}
 Also
  \begin{equation}
 \label{e:eqt3}
 \begin{array}{ll}
 C(x \mid \Pi)  + C(y \mid \Pi) &\geqp C(x' \mid \Pi) + C(y' \mid  \Pi)  \quad\quad \mbox{ (by def. of clones)}\\
 &\geqp C(x' ,  y' \mid \Pi)  \\
 &\geqp C(x',y', \ttt \mid \Pi) - \log \rho(\ttt).
 \end{array}
 \end{equation}
 The reason for the last inequality is that $(x',y')$ belongs to at most 
 $\rho_{x',y'}$ rectangles of the covering $\Pi$. On the other hand, $(x',y')$ belongs to the rectangle $\ttt$ and thus $\rho_{x',y'} \leq \rho(\ttt)$. This implies that $C(\ttt \mid x',y', \Pi) \leq  \log  \rho (\ttt)+ O(1)$.   Combining inequalities~\eqref{e:eqt2} and~\eqref{e:eqt3},
  \[
  C(\ttt \mid \Pi) + C(x \mid \ttt, \Pi) + C(y \mid \ttt, \Pi) \leqp C(x \mid \Pi) + C(y \mid \Pi) +   \log \rho (\ttt).
  \]
 Subtracting $C(x,y, \ttt \mid \Pi)$ in the left hand side and (the smaller) $C(x,y \mid \Pi)$ in the right hand side, we obtain
 \[
 I(x:y \mid \ttt, \Pi) \leqp I(x:y \mid \Pi) +   \log \rho(\ttt),
 \]
 which concludes the proof.
 \end{proof}



\section{Lower bounds for the communication complexity of 2-party protocols}
\label{s:twoparty}

The crux in communication complexity is  proving lower bounds for concrete problems, because such lower bounds can be transferred  to other domains (data structures, streaming algorithms, circuit complexity, and many other ones).   Theorem~\ref{t:extension} has some consequences, which can be used to prove lower bounds, as we will show below.

Consider a $2$-party deterministic and computable protocol which allows Alice and Bob to compute a function $f$ on input $(x,y)$, when Alice has $x$ and Bob has $y$. The transcript $\ttt$ has two parts:  $t_A$, comprising the messages sent by Alice, and $t_B$, comprising the messages sent by Bob. Clearly,  $C(t_A) \geq C(\fff \mid y)$ because 
$\fff \stackrel{def.}{=} f(x,y)$ can be computed from $y$ and  $t_A$.  Similarly, $C(t_B) \geq C(\fff \mid x)$.  In this way we can compute lower bounds for $t_A$ and $t_B$ from bounds of the conditional complexity of $\fff$. We would like to do the same thing for $\ttt$.  In a deterministic protocol, we do have that $\ttt = (t_A, t_B)$, but we cannot say directly that $C(\ttt) \geq C(\fff \mid x) + C(\fff \mid y)$ because it is not clear if $C(\ttt) \geqp C(t_A) + C(t_B)$.  In a nondeterministic communication protocol, $\ttt$ is provided by Merlin and there are no $t_A$ and $t_B$. Nevertheless, we show that for every protocol  $\Pi = (\rt, t)$, $C(\ttt \mid \Pi) \geqp C(\fff \mid x, \Pi) + C(\fff \mid y, \Pi)- \log \rho(\Pi)$. In fact, Theorem~\ref{t:bound-1} shows that a stronger bound holds true.

\begin{theorem} 
\label{t:bound-1}
\hfill
For every computable function $f$, for every protocol $\Pi= (\rt, t)$ that computes $f$ over domain $D$, and for every $(x,y) \in D$,
\begin{equation}
\label{e:main}
C(\ttt \mid \Pi) \geqp C(\fff \mid x, \Pi) + C(\fff \mid y, \Pi) + C(\ttt \mid x, \fff, \Pi) + C(\ttt \mid y, \fff, \Pi) - \log \rho(\ttt).
\end{equation}
\end{theorem}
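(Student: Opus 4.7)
The plan is to reduce the bound to Theorem~\ref{t:extension} by applying Lemma~\ref{l:formulas}(2) and then splitting each conditional complexity using the chain rule. There are essentially two steps.

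First, I would rewrite Theorem~\ref{t:extension} in the symmetric ``triple information'' form. By Lemma~\ref{l:formulas}(1), the inequality $I(x:y \mid \Pi) \geqp I(x:y \mid \ttt, \Pi) - \log \rho(\ttt)$ is equivalent to $I(x:y:\ttt \mid \Pi) \geqp -\log \rho(\ttt)$. Then Lemma~\ref{l:formulas}(2) (relativized with $\Pi$) yields
\begin{equation*}
C(\ttt \mid \Pi) - C(\ttt \mid x, \Pi) - C(\ttt \mid y, \Pi) + C(\ttt \mid x, y, \Pi) \geqp -\log \rho(\ttt).
\end{equation*}
Since $\ttt = t(x,y)$ is computable from $x$, $y$, and $\Pi$, the term $C(\ttt \mid x,y,\Pi)$ is $O(1)$, so I obtain
\begin{equation*}
C(\ttt \mid \Pi) + \log \rho(\ttt) \geqp C(\ttt \mid x, \Pi) + C(\ttt \mid y, \Pi).
\end{equation*}

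Second, I would decompose each of $C(\ttt \mid x, \Pi)$ and $C(\ttt \mid y, \Pi)$ using the chain rule together with the fact that the protocol computes $f$. Because every rectangle of $\rt$ is $f$-monochromatic and $\Pi$ encodes the color of each rectangle, the value $\fff$ is determined by $\ttt$ and $\Pi$, so $C(\fff \mid \ttt, \Pi) = O(1)$. Applying the chain rule twice to $C(\ttt, \fff \mid x, \Pi)$,
\begin{equation*}
C(\ttt \mid x, \Pi) \eqp C(\ttt, \fff \mid x, \Pi) \eqp C(\fff \mid x, \Pi) + C(\ttt \mid x, \fff, \Pi),
\end{equation*}
and analogously $C(\ttt \mid y, \Pi) \eqp C(\fff \mid y, \Pi) + C(\ttt \mid y, \fff, \Pi)$.

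Plugging these two identities into the inequality from the first step yields the claimed bound~\eqref{e:main}. There is no real obstacle here; the only thing to be mindful of is that all $\eqp$ and $\geqp$ relations absorb additive $O(\log n)$ losses, and the repeated use of the chain rule keeps those losses under control (each invocation contributes a single $O(\log n)$ term, and we invoke it only a constant number of times). The main conceptual content all lives in Theorem~\ref{t:extension}; the present theorem is essentially its ``symmetric'' reformulation adapted to protocols that actually compute~$f$.
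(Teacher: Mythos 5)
Your proposal is correct and follows essentially the same route as the paper: both start from Theorem~\ref{t:extension}, use the fact that $C(\ttt \mid x,y,\Pi) = O(1)$ to land on a bound relating $C(\ttt \mid \Pi)$ to $C(\ttt\mid x,\Pi) + C(\ttt\mid y,\Pi)$, and then peel off $\fff$ with the chain rule using the $f$-monochromaticity of the rectangles. The only (immaterial) organizational difference is that the paper first isolates the slightly more general Lemma~\ref{l:bound} — where $f$-monochromaticity is relaxed to $C(\fff\mid\Pi,\ttt,x), C(\fff\mid\Pi,\ttt,y) = O(\log n)$ — and derives the theorem as a corollary, while you prove the theorem directly and phrase the intermediate step via $I(x:y:\ttt\mid\Pi)$ and Lemma~\ref{l:formulas}(2) rather than via the raw $C(\cdot)$ expansion of $I(x:y \mid \Pi) \geqp I(x:y\mid\ttt,\Pi) - \log\rho(\ttt)$.
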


\if01
\begin{remark}
\label{r:thickness}
Equation~\eqref{e:main} is not very useful if protocols have large thickness.  In our examples in the following sections, the results are interesting for protocols with thickness
$\rho(\Pi)$ bounded by $2^{o(n)}$. Protocols with restricted thickness have been studied in the literature under the  moniker ``nondeterministic protocols with restricted advice," see~\cite[Section 4.7]{juk:b:bool}.   We note that if a protocol has thickness larger than quasi-polynomial, then the communication cost is larger than polylogarithmic  and the  protocol is inefficient from the point of view of communication complexity.
\end{remark}
\fi
Theorem~\ref{t:bound-1} is an immediate corollary of the following lemma which relaxes the condition that rectangles are $f$-monochromatic by requiring only that $\fff$ can be computed from the rectangle $\ttt$ and $x$ and also from the rectangle $\ttt$ and $y$ (and we can even allow $O(\log n)$ help bits in the computation).

\begin{lemma} 
\label{l:bound}
Let $\Pi = (\rt, t)$ be a protocol over domain $D$. For $(x,y) \in D$, let $\fff$ be a string such that $C(\fff \mid \Pi, \ttt, x) = O(\log n)$ and 
$C(\fff \mid \Pi, \ttt, y) = O(\log n)$.  Then,  
\begin{equation}
\label{e:main1}
C(\ttt \mid \Pi) \geqp C(\fff \mid x, \Pi) + C(\fff \mid y, \Pi) + C(\ttt \mid x,\fff, \Pi) + C(\ttt \mid y, \fff, \Pi) - \log \rho(\ttt).
\end{equation}

\end{lemma}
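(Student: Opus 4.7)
The plan is to derive Lemma~\ref{l:bound} as a direct consequence of Theorem~\ref{t:extension}, combined with a symmetric chain-rule decomposition of $C(\ttt \mid x, \Pi)$ and $C(\ttt \mid y, \Pi)$ that inserts $\fff$ as an intermediate quantity.

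First I would translate Theorem~\ref{t:extension} into a statement about the triple information $I(x : y : \ttt \mid \Pi)$. By Lemma~\ref{l:formulas}(1) (relativized to $\Pi$), Theorem~\ref{t:extension} is equivalent to $I(x:y:\ttt \mid \Pi) \geqp -\log \rho(\ttt)$. Applying Lemma~\ref{l:formulas}(2) to unfold the triple information in terms of conditional complexities of $\ttt$ yields
\[
C(\ttt \mid \Pi) - C(\ttt \mid x, \Pi) - C(\ttt \mid y, \Pi) + C(\ttt \mid x, y, \Pi) \geqp -\log \rho(\ttt).
\]
Since the transcript $\ttt = t(x,y)$ is computable from $x, y$ and $\Pi$, we have $C(\ttt \mid x, y, \Pi) = O(1)$, so this simplifies to
\[
C(\ttt \mid \Pi) \geqp C(\ttt \mid x, \Pi) + C(\ttt \mid y, \Pi) - \log \rho(\ttt).
\]

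Next I would use the chain rule to replace each term $C(\ttt \mid x, \Pi)$ and $C(\ttt \mid y, \Pi)$ by a decomposition through $\fff$. By the Kolmogorov--Levin chain rule,
\[
C(\ttt, \fff \mid x, \Pi) \eqp C(\fff \mid x, \Pi) + C(\ttt \mid x, \fff, \Pi) \eqp C(\ttt \mid x, \Pi) + C(\fff \mid x, \ttt, \Pi),
\]
and by hypothesis $C(\fff \mid \Pi, \ttt, x) = O(\log n)$. Rearranging yields $C(\ttt \mid x, \Pi) \geqp C(\fff \mid x, \Pi) + C(\ttt \mid x, \fff, \Pi)$, and symmetrically $C(\ttt \mid y, \Pi) \geqp C(\fff \mid y, \Pi) + C(\ttt \mid y, \fff, \Pi)$ using $C(\fff \mid \Pi, \ttt, y) = O(\log n)$.

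Substituting these two inequalities into the previous bound gives exactly~\eqref{e:main1}. There is no real obstacle here; the proof is essentially bookkeeping, and the only point where care is required is verifying that the $O(\log n)$ slack absorbed into $\geqp$ is enough to swallow the $C(\fff \mid x, \ttt, \Pi)$ and $C(\fff \mid y, \ttt, \Pi)$ terms, which the hypothesis of the lemma guarantees. Theorem~\ref{t:bound-1} is then obtained by instantiating $\fff = f(x,y)$: monochromaticity of the rectangles in $\rt$ gives $C(\fff \mid \Pi, \ttt) = O(1)$, which is strictly stronger than what Lemma~\ref{l:bound} requires.
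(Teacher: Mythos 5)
Your proof is correct and is essentially the paper's own argument, just packaged slightly differently: you route through $I(x:y:\ttt\mid\Pi)\geqp -\log\rho(\ttt)$ and Lemma~\ref{l:formulas}(2) to reach $C(\ttt\mid\Pi)\geqp C(\ttt\mid x,\Pi)+C(\ttt\mid y,\Pi)-\log\rho(\ttt)$, whereas the paper unfolds Theorem~\ref{t:extension} by hand into the equivalent inequality $C(x\mid\Pi)+C(y\mid\Pi)\geqp C(x\mid\ttt,\Pi)+C(y\mid\ttt,\Pi)+C(\ttt\mid\Pi)-\log\rho(\ttt)$. After that, both proofs use the identical chain-rule insertion of $\fff$ (justified by the hypotheses $C(\fff\mid\Pi,\ttt,x)=O(\log n)$ and $C(\fff\mid\Pi,\ttt,y)=O(\log n)$) to finish, so the two derivations are the same computation up to rearrangement.
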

\begin{proof} 
To simplify the notation, we drop in all the $C(\cdots)$, $I(\cdots)$ terms below the conditioning on $\Pi$. For example, with this notational convention, the conclusion becomes 
\[
C(\ttt) \geqp C(\fff \mid x) + C(\fff \mid y ) + C(\ttt \mid x,\fff) + C(\ttt \mid y, \fff) - \log \rho(\ttt).
\]

From Theorem~\ref{t:extension}, 
\[
I(x:y) \geqp I(x:y \mid \ttt) -  \log \rho(\ttt).
\]
 This can be written as 
 \[
 C(x) + C(y) - C(x,y) \geqp C(x \mid \ttt) + C(y \mid \ttt) - C(x,y \mid \ttt) - \log \rho(\ttt),
 \]
  which taking into account that  $C(x,y, \ttt) \leq C(x,y) +  O(1)$ 
(because, given $\Pi$,  $\ttt$ can be computed from $x,y$)  implies 
\begin{equation}
\label{e:eq1}
C(x) + C(y) \geqp C (x \mid \ttt) + C(y \mid \ttt) + C(\ttt) - \log \rho(\ttt).
\end{equation} 
In the right hand side,
\[
\begin{array}{ll}
C(x \mid \ttt) & \eqp C(x,\ttt) - C(\ttt) \\
& \eqp C(x,\ttt,\fff) - C(\ttt) \\
& \eqp C(x, \fff) + C(\ttt \mid x, \fff) - C(\ttt).
\end{array}
\]
We have used the fact that $C(x, \ttt) \eqp  C(x,\ttt, \fff)$ which follows from the lemma's  hypothesis regarding the complexity of $\fff$.
Similarly,
\[
C(y \mid \ttt) \eqp C(x, \fff) + C(\ttt \mid x, \fff) - C(\ttt).
\]
Plugging these inequalities in Equation~\eqref{e:eq1}, 
\[
C(x) + C(y)  \geqp C(x,\fff) + C(\ttt \mid x, \fff) + C(y,\fff) + C(\ttt \mid y, \fff) - C(\ttt) - \log \rho(\ttt),
\]
which can be rewritten as
\[
C(\ttt) \geqp C(\fff \mid x) + C(\fff \mid y) +C(\ttt \mid x,\fff) + C(\ttt \mid y,\fff) - \log \rho(\ttt).
\]
\if01
\begin{equation}
\label{e1}
\begin{array}{ll}
C(a \mid t) & \eqp C(a,t)  - C(t)\\
& \eqp C(a,t, f(a,b))  - C(t)\\
& \eqp C(a, f(a,b)) + C(t \mid   a,f(a,b)) - C(t) \\
        & \eqp C(a \mid f(a,b)) +  C(t \mid   a,f(a,b))  + C(f(a,b)) - C(t).
\end{array}
\end{equation}
Similarly (by relativizing with $b$),
\begin{equation}
\label{e2}
\begin{array}{ll}
C(a \mid b, t) & \eqp   C(a \mid b, f(a,b)) +  C(t \mid  b,  a,f(a,b))  + C(f(a,b) \mid b) - C(t \mid b) \\
& \eqp C(a \mid b, f(a,b))  + C(f(a,b) \mid b) - C(t \mid b).
\end{array}
\end{equation}
Subtracting \eqref{e2} from \eqref{e1}, we obtain
\begin{equation}
\label{e3}
\begin{array}{ll}
I(a:b \mid t) & \eqp   I (a : b \mid f(a,b)) +  C(t \mid  a,f(a,b))  + C(f(a,b)) - C(t) -  \\
& \hspace{6cm}  - C(f(a,b) \mid b) +  C(t \mid b),
\end{array}
\end{equation}
which implies
\begin{equation}
\label{e4}
\begin{array}{ll}
C( t) & \eqp   I (a : b \mid f(a,b))  - I(a:b \mid t) +  C(t \mid  a,f(a,b))  + C(f(a,b)) -   \\
& \hspace{6cm}  - C(f(a,b) \mid b) +  C(t \mid b) \\
& \geqp  I (a : b \mid f(a,b))  - I(a:b)  + I(a:b:t) +  \\
& \hspace{5cm} C(t \mid  a,f(a,b))  + C(f(a,b)) -  C(f(a,b) \mid b) +  C(t \mid b) \\
&\hspace{5cm}\mbox{(by Lemma~\ref{l:andrei})}.
\end{array}
\end{equation}
Next, we use Lemma~\ref{l:mutualf} and the fact that $C(t \mid b) \geq C(f(a,b) \mid b)$ (because, $f(a,b)$ can be computed from $b$ and $t$), and we obtain
\[
C(t) \geqp  C(f(a,b) \mid a) + C(f(a,b) \mid b) + I(a:b:t) +  C(t \mid a, f(a,b)) ,
\]
which is equivalent to the conclusion of the theorem.
\fi
\end{proof}
 

Similar bounds hold true for other mechanisms by which a communication protocol performs a computational task, i.e., for protocols computing functions with small error, or protocols that compute relations with and without error. 
\begin{theorem} 
\label{t:bound-2}
\hfill
\begin{enumerate}
\item In case the protocol $\Pi = (\rt, t, g_A, g_B)$ computes $f$ over domain $D$ with error $\epsilon$, then the inequality~\eqref{e:main} holds with probability $1- O(\epsilon)$ over $(x,y)$ chosen uniformly at random in $D$.
\item Suppose the protocol $\Pi = (\rt, t)$ computes the relation $\Gamma \subseteq D_1 \times D_2 \times D_3$ and let $\fff$ be the color of rectangle $\ttt$.  Then the inequality~\eqref{e:main} holds true for every $(x,y)$ in $D_1 \times D_2$.
\item  Suppose the protocol $\Pi = (\rt, t)$ computes the relation $\Gamma \subseteq D_1 \times D_2 \times D_3$ with error $\epsilon$ and let $\fff\stackrel{def.}{=} g_A(x, \ttt) = g_B(y, \ttt)$ (in case the latter two values are not equal, then $\fff$ is not defined). Then the inequality~\eqref{e:main} holds true with probability $1-O(\epsilon)$ over $(x,y)$ chosen uniformly at random  in $D_1 \times D_2$.
\end{enumerate}
\end{theorem}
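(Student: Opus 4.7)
The plan is to show that all three parts of Theorem~\ref{t:bound-2} reduce to Lemma~\ref{l:bound} once we identify, for (almost) every $(x,y)$, a string $\fff$ satisfying the two hypotheses $C(\fff \mid \Pi, \ttt, x) = O(\log n)$ and $C(\fff \mid \Pi, \ttt, y) = O(\log n)$. In each case I would treat $\Pi$ as including the full description $(\rt, t, g_A, g_B)$, so the relevant decoding functions may be invoked at no cost in the conditional descriptions.

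For part~(2), the rectangle $\ttt$ is $\Gamma$-monochromatic by Definition~\ref{d:protocolrelation}(1), so taking $\fff$ to be the color of $\ttt$ gives $C(\fff \mid \Pi, \ttt) = O(1)$. Both computability conditions are then trivially satisfied, and Lemma~\ref{l:bound} delivers~\eqref{e:main} for every $(x,y) \in D_1 \times D_2$.

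For part~(1), let $G \subseteq D$ be the set of inputs on which the protocol behaves correctly, i.e., $G = \{(x,y) \in D : g_A(x, \ttt) = g_B(y, \ttt) = f(x,y)\}$. By Definition~\ref{d:protocol-error}, $|G|/|D| \geq 1 - \epsilon$. For any $(x,y) \in G$, the value $\fff = f(x,y)$ is produced by running $g_A$ on $(x, \ttt)$ and also by running $g_B$ on $(y, \ttt)$, so both hypotheses of Lemma~\ref{l:bound} hold and the inequality~\eqref{e:main} follows. The statement with probability $1 - O(\epsilon)$ is then immediate. Part~(3) is handled identically: by Definition~\ref{d:protocolrelation}(2), the event $\{g_A(x, \ttt) = g_B(y, \ttt)\}$ occurs with probability at least $1 - \epsilon$ over uniform $(x,y) \in D_1 \times D_2$, and on this event the value $\fff \stackrel{\rm def.}{=} g_A(x, \ttt) = g_B(y, \ttt)$ meets both computability hypotheses of the lemma, which yields~\eqref{e:main}.

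I do not expect a genuine obstacle here; the content is entirely packaged in Lemma~\ref{l:bound} (and ultimately in Theorem~\ref{t:extension}). The only care required is bookkeeping: making sure that $g_A$ and $g_B$ are folded into $\Pi$ so that their use costs $O(1)$ in descriptions, and verifying that the complements of the "good" events have measure $O(\epsilon)$ in parts~(1) and~(3). Crucially, no round-by-round argument is needed, since Theorem~\ref{t:extension} treats the covering $\rt$ as a one-shot combinatorial object, which is exactly what makes the reduction uniform across deterministic, error-prone, functional, and relational protocols.
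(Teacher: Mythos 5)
Your reduction to Lemma~\ref{l:bound} is exactly the intended argument; the paper states Theorem~\ref{t:bound-2} without proof precisely because, once $\fff$ is identified in each setting (the color of $\ttt$ in the relational case, $f(x,y)$ on the good set in the error case, the common value $g_A(x,\ttt)=g_B(y,\ttt)$ on the good set in the relational-with-error case), the two hypotheses $C(\fff \mid \Pi, \ttt, x)=O(\log n)$ and $C(\fff \mid \Pi, \ttt, y)=O(\log n)$ hold trivially when $g_A, g_B$ are part of $\Pi$. Your bookkeeping of the $1-\epsilon$ measure of the good set and the resulting $1-O(\epsilon)$ bound is correct, so this is a complete proof in the same spirit the paper leaves implicit.
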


\section{Application to Arthur-Merlin protocols}
\label{s:am}
Equation~\eqref{e:main}  can be used to establish lower bounds for protocols that use randomness, or mix  nondeterminism and randomness. In the latter type of protocols, Merlin provides a proof (like in standard nondeterministic protocols), and Alice and Bob (which together play the role of Arthur) probabilistically verify the proof and next compute the common output.  There are actually two types of such protocols: Merlin-Arthur (MA)  protocols in which the randomness is shared between Alice and Bob but is not visible to Merlin, and Arthur-Merlin (AM) protocols in which the randomness is shared between all parties, Alice, Bob, and Merlin.   We provide here a lower bound for AM protocols, which is tight for protocols whose thickness is not too large. The lower bound is valid for all 2-party (\ie, without Merlin) randomized protocols. To our knowledge, this result  does not seem to be attainable by other methods.  We note that lower bounds for AM protocols are considered to be difficult, while, equipped with the results from the previous section, our proof is short and easy.  For a discussion on AM protocols, the reader can see the survey paper~\cite{goo-pit-wat:j:landscape}, where lower bounds for AM protocols are considered to be beyond reach at the current time; however, this consideration  refers to lower bounds for boolean functions, and our example concerns a non-boolean function.

An AM protocol is essentially a distribution over nondeterministic protocols. More precisely, Alice, Bob and Merlin share a source of randomness. For each $r$ drawn from the source, there is a protocol $\Pi_r = (\rt_r, t_r, g_{r,A}, g_{r,B})$ as in Definition~\ref{d:protocol-error}.  The protocol computes the function $f$ if for every $(x,y)$ in the domain of $f$, with probability of $r$ at least $2/3$,
\begin{equation}
\label{e:eq1.1}
g_{r,A}(x,t_r(x,y)) = g_{r,B}(y,t_r(x,y)) = f(x,y).
\end{equation}
The communication cost of an AM protocol is $\log  (\max_r |\rt_r|)$, \ie,   the logarithm of the maximum number of rectangles, over all randomness $r$. The thickness of an AM protocol $\Pi$ is denoted $\rho(\Pi)$ and is by definition the maximum thickness of all protocols $\Pi_{r}$.

We now present our first application. The arguments below are valid for any group, but for concreteness, let us consider the group $G = (\mathbb{Z}_2^n, +)$. Alice and Bob want to compute $f(x,y) = x+y$, where 
Alice has $x \in \mathbb{Z}_2^n$ and Bob has  $y \in \mathbb{Z}_2^n$.

A straightforward protocol consists in Merlin providing $x$ and $y$. Alice and Bob (without actually using randomness) check and confirm to each other that Merlin  has provided their inputs, after which they compute $f(x,y)$. This protocol has thickness $1$, and the communication cost is $2n+2$ (for $x$ and $y$ and for the two confirmation bits). There is an MA protocol with communication $n + O(1)$: Merlin sends a value $z$ claiming $z = x+y$, and then Alice and Bob  with $O(1)$ additional communication  and  using their shared randomness (which is secret to Merlin) can check that $z-x$ and $y$ are equal using random fingerprinting in the standard way.
This strategy does not work in an AM protocol, because if Merlin knows the randomness, he can cheat by using a wrong $z$ that passes the Alice/Bob test. Still,
 it is in principle conceivable that there may exist an AM protocol which is more communication-efficient than the trivial protocol above. We show that this is not possible for AM protocols whose thickness is not too large.

\begin{claim}
\label{c:sum}
Any AM protocol $\Pi$ that computes $f$  must have communication cost  at least $2n - \log (\rho(\Pi)) - O(1)$. Thus,  $2n$ is essentially a lower bound for AM protocols  with, say,  $2^{o(n)}$ thickness. 
\end{claim}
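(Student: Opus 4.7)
The plan is to apply Lemma~\ref{l:bound} to a single nondeterministic instance $\Pi_{r_0}$ of the AM protocol on a triple $(x_0,y_0,r_0)$ chosen so that $\Pi_{r_0}$ succeeds on $(x_0,y_0)$ and the whole triple is Kolmogorov-random conditional on $\Pi$. The crucial idea is that when $r_0$ is Kolmogorov-independent of $(x_0,y_0)$, conditioning on $r_0$ does not erase the complexity of the inputs, so the nondeterministic lower bound carries over to the AM setting without paying an additive penalty in the length $R$ of the random tape.

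First, among the $2^{2n+R}$ triples $(x,y,r)$, the ones satisfying $C(x,y,r\mid\Pi)\geqp 2n+R$ form all but an inverse-polynomial fraction, and by the AM hypothesis for every fixed $(x,y)$ at least a $2/3$ fraction of $r$'s make $\Pi_r$ succeed on $(x,y)$; thus the intersection of these two sets is nonempty and we fix some $(x_0,y_0,r_0)$ in it. The chain rule then forces each of $C(x_0\mid\Pi)$, $C(y_0\mid x_0,\Pi)$, and $C(r_0\mid x_0,y_0,\Pi)$ to lie within $O(\log n)$ of its trivial upper bound $n$, $n$, $R$. A further application of the chain rule yields $C(y_0\mid x_0,r_0,\Pi)\geqp n$ --- intuitively, the Kolmogorov-random tape $r_0$ carries essentially no information about $y_0$ given $x_0$. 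Since $f(x_0,y_0)=x_0+y_0$, the strings $\fff$ and $y_0$ are mutually computable given $x_0$, so
\[
C(\fff\mid x_0,\Pi_{r_0})\geqp n \quad\text{and}\quad C(\fff\mid y_0,\Pi_{r_0})\geqp n.
\]

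Because $\Pi_{r_0}$ succeeds on $(x_0,y_0)$, the output functions $g_{r_0,A}$ and $g_{r_0,B}$ satisfy the hypothesis of Lemma~\ref{l:bound} for this particular input. Applying the lemma (relativized to $\Pi_{r_0}$) and dropping the two nonnegative terms $C(\ttt\mid x,\fff,\Pi)$ and $C(\ttt\mid y,\fff,\Pi)$ on the right-hand side of~\eqref{e:main1} gives
\[
C(\ttt_{r_0}\mid\Pi_{r_0})\;\geqp\;2n-\log\rho(\ttt_{r_0})\;\geq\;2n-\log\rho(\Pi).
\]
Finally, $C(\ttt_{r_0}\mid\Pi_{r_0})\leq\log|\rt_{r_0}|$ is at most the communication cost of the AM protocol, yielding the desired lower bound.

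The main obstacle is keeping $r_0$ independent of the inputs. If one instead chose $r_0$ as (say) the lexicographically first $r$ for which $\Pi_r$ succeeds on $(x_0,y_0)$, then $r_0$ could adversarially encode up to $R$ bits about $y_0$, and the bound would degrade to roughly $2n-2R-\log\rho(\Pi)$, which is useless when $R$ is large. Picking $(x_0,y_0,r_0)$ jointly Kolmogorov-random --- rather than letting $r_0$ be a function of $(x_0,y_0)$ --- is precisely what side-steps this difficulty and allows the argument to go through for AM protocols of arbitrary random-tape length.
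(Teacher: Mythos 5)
Your proof is correct, and it takes a genuinely different route from the paper's.

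The paper first fixes a single randomness string $r_0$ by a plain averaging argument (choosing $r_0$ so that $\Pi_{r_0}$ is correct on at least $2/3$ of all pairs), and then runs a \emph{counting} argument over the pairs: for any fixed string $z$ (in particular $z=\Pi_{r_0}$), all but a $2^{-c}$ fraction of $y$ satisfy $C(x+y\mid x,z)\ge n-c$, and symmetrically for $x$. Combining the two and intersecting with the $2/3$-fraction of correct pairs yields the bound for a large fraction of inputs, hence for some input. You instead pick a \emph{jointly} Kolmogorov-random successful triple $(x_0,y_0,r_0)$ and extract the two lower bounds $C(x_0+y_0\mid x_0,\Pi_{r_0})\geqp n$ and $C(x_0+y_0\mid y_0,\Pi_{r_0})\geqp n$ from the chain-rule decomposition of $C(x_0,y_0,r_0\mid\Pi)$; the rest is the same application of Lemma~\ref{l:bound}. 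Both arguments crucially hinge on Lemma~\ref{l:bound}, so the core mechanism is shared, but the way you handle the random tape is conceptually different. The paper's version is a hair stronger in that it gives the lower bound on a constant fraction of all pairs $(x,y)$ (for a fixed good $r_0$), which is reused in the subsequent remark about randomized protocols; your version produces the lower bound for one designed triple, which suffices for Claim~\ref{c:sum} itself.

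One small caveat: the ``main obstacle'' you flag (that a bad choice of $r_0$ as a function of $(x_0,y_0)$ could encode up to $R$ bits of $y_0$) is not an obstacle for the paper's argument either, precisely because there $r_0$ is fixed independently of the pair by averaging and the complexity lower bounds come from a counting argument over $(x,y)$ rather than from a randomness property of a single pair. Your joint-randomization trick is a clean way to sidestep it, but it is worth noting that the averaging-plus-counting route avoids the issue just as cleanly. Also, the claim as written advertises $-O(1)$, while Lemma~\ref{l:bound} carries an $O(\log n)$ precision loss, so both proofs in fact yield $2n-\log\rho(\Pi)-O(\log n)$; this is a feature of the claim statement, not of your proof.
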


Let us consider an AM protocol that computes $f(x,y)$. For each randomness $r$, let ${\rm GOOD}_r$ denote the set of pairs $(x,y)$ which satisfy the relations in equation~\eqref{e:eq1.1}. Such a pair $(x,y)$  is said to be  correct with respect to $r$. 

Let us first attempt a direct argument which does not utilize the tools developed in the previous section.  By a standard averaging argument there is some $r_0$ for which ${\rm GOOD}_{r_0}$ contains at least $(2/3)$ of the input pairs $(x,y)$. No rectangle in $\rt_{r_0}$ can have more than $2^n$ pairs which are correct with respect to $r_0$. The reason for this is that the number of rows  in a rectangle is bounded by $2^n$ and each row  in the rectangle can have at most one correct pair (because two cells $(x,y_1)$ and $(x,y_2)$ in row $x$ are assigned the same value $z$ by $g_A$ and it is not possible that both $x+y_1$ and $x+y_2$ are equal to $z$).   Since the number of correct cells (\ie,   the size of   ${\rm GOOD}_{r_0}$) is at least $(2/3) 2^{2n}$, it follows that $\rt_{r_0}$ contains at least $(2/3)2^{2n}/2^n = 2^{n-O(1)}$ rectangles, and therefore the communication cost is at least $n-O(1)$, which is smaller than the claimed $2n$ lower bound.

We can do better by using Lemma~\eqref{l:bound}. 
Using this lemma and the same $r_0$ as in the argument above, we obtain that for every pair $(x,y)$, which is correct with respect to $r_0$ (and recall that there are at least $(2/3)2^{2n}$ such pairs),
\begin{equation}
\label{e:eq2}
C(t_{r_0}(x,y) \mid \Pi_{r_0} ) \geq C(x+y \mid x, \Pi_{r_0}) + C(x+y \mid y, \Pi_{r_0}) - \log (\rho(\Pi_{r_0})).
\end{equation}
 We next observe that for every $x$ and every string $z$, and for any $c$,
$C(x+y \mid x, z) \geq n- c$ for at least  a fraction of $(1-2^{-c})$ elements $y$  in $\mathbb{Z}_2^n$, because $x+y$ takes $2^n$ possible values and only $2^{n-c} - 1$ of them can have complexity less than $n-c$.  The similar relation holds if we swap $x$ and $y$.
Using these estimations   for the first two terms in the right hand side of Equation~\eqref{e:eq2} (with $\Pi_{r_0}$ in the role of $z$), we obtain
that for at least a fraction of $(2/3 - 2^{-c})$ of all pairs $(x,y)$, $C(t_{r_0}(x,y) \mid \Pi_{t_{r_0}}) \geq  2n - c - \log (\rho(\Pi_{r_0}))$. This implies that  the communication cost of any AM protocol is at least $2n - \log (\rho(\Pi)) - O(1)$, as claimed.

Regarding the claim about randomized protocols made at the beginning of this section, we note that a 2-party randomized protocol is a distribution over 2-party deterministic protocols, and thus they can be viewed as AM protocols with thickness equal to $1$. The above argument implies that any randomized protocol that computes $f(x,y) = x+y$ (in the group $\mathbb{Z}_2^n$) with probability $2/3$ has communication cost $2n - O(1)$  for the majority of input pairs $(x,y)$.

\if01
As we have remarked earlier (see Remark~\ref{r:thickness}), $\log (\rho(\Pi))$ is a lower bound of the communication cost. Then Claim~\ref{c:sum} implies that any AM protocol for 
$f(x,y) = x+y$ (bitwise XOR) has communication cost at least $n - O(1)$, which is the same as the lower bound obtained with the direct argument. There are situations where the direct argument does not seem to be applicable, and where we can use the results from the previous section. One such example is for $g(x,y)$ equal to the bitwise OR of $x$ and $y$ (\ie, Alice has the $n$-bit string $x= x_1 \ldots x_n$, Bob has the $n$-bit string $y= y_1 \ldots y_n$, and they want to compute the $n$-bit string $g(x,y) = (x_1 \vee y_1) \ldots (x_n \vee y_n)$).
\begin{claim}
\label{c:or}
Any AM protocol $\Pi$ that computes $g$  must have communication cost  at least $n/2 - 2\sqrt{n} - O(1)$.
\end{claim}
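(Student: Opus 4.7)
The strategy parallels the proof of Claim~\ref{c:sum}, with the OR function $g(x,y)=x\vee y$ replacing XOR. First, by averaging over the AM randomness, I fix a seed $r_0$ for which the nondeterministic protocol $\Pi_{r_0}$ is correct on at least a $2/3$ fraction of input pairs. For each $(x,y)$ on which $\Pi_{r_0}$ is correct, Theorem~\ref{t:bound-2}(1) gives
\[
C(t_{r_0}(x,y) \mid \Pi_{r_0}) \geqp C(g(x,y) \mid x, \Pi_{r_0}) + C(g(x,y) \mid y, \Pi_{r_0}) - \log \rho(\Pi_{r_0}),
\]
so the task reduces to lower-bounding the two conditional complexities for typical pairs.

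The key estimate is that $C(x\vee y \mid x, \Pi_{r_0}) \geqp n/2 - 2\sqrt{n}$ holds for all but an $o(1)$ fraction of pairs. To see this, observe that once $x$ is fixed, the output $x\vee y$ is computably equivalent (given $x$ and $\Pi_{r_0}$) to the restriction $y|_{S_x}$, where $S_x = \{i : x_i = 0\}$: on $S_x$ we have $(x\vee y)_i = y_i$, and outside $S_x$ we have $(x\vee y)_i = 1$. Hence $C(x\vee y \mid x, \Pi_{r_0}) \eqp C(y|_{S_x} \mid x, \Pi_{r_0})$. A Chernoff bound yields $|S_x| \geq n/2 - \sqrt{n}$ with high probability over uniform $x$, and conditional on such an $x$, the string $y|_{S_x}$ is uniform on $\{0,1\}^{|S_x|}$ when $y$ is uniform. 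A standard counting argument then gives $C(y|_{S_x} \mid x, \Pi_{r_0}) \geqp |S_x| - \sqrt{n}$ for all but a $2^{-\sqrt{n}}$ fraction of $y$, producing the claimed estimate. The symmetric bound for $C(x\vee y \mid y, \Pi_{r_0})$ follows by swapping the roles of $x$ and $y$.

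A union bound over the correctness failure and the two probabilistic exceptions still leaves a constant fraction of pairs on which all three estimates hold simultaneously. For any such $(x,y)$, combining them yields $C(t_{r_0}(x,y) \mid \Pi_{r_0}) \geqp n - 4\sqrt{n} - \log \rho(\Pi_{r_0})$. Since every cell is covered by at most $|\rt_{r_0}|$ rectangles, $\log \rho(\Pi_{r_0}) \leq \log |\rt_{r_0}|$; combined with the trivial upper bound $C(t_{r_0}(x,y) \mid \Pi_{r_0}) \leqp \log|\rt_{r_0}|$ (since $t_{r_0}(x,y)$ is one of $|\rt_{r_0}|$ rectangles), this gives $2\log|\rt_{r_0}| \geqp n - 4\sqrt{n}$, so $\log|\rt_{r_0}| \geq n/2 - 2\sqrt{n} - O(1)$. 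The communication cost of $\Pi$ dominates $\log|\rt_{r_0}|$, which yields the claimed lower bound. The main obstacle I expect is the careful choreography of failure probabilities from the averaging step, the Chernoff bound, and the counting argument, so that they all fit inside the $1/3$ slack supplied by the AM correctness condition while remaining balanced against the $\sqrt{n}$ precision of the probabilistic estimates; modulo this essentially routine bookkeeping, the argument is a direct application of Lemma~\ref{l:bound}.
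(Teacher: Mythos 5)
Your proof is correct and follows essentially the same route as the paper's own (unpublished draft) argument: fix a good seed $r_0$ by averaging, lower-bound $C(g(x,y)\mid x,\Pi_{r_0})$ and $C(g(x,y)\mid y,\Pi_{r_0})$ by $n/2-2\sqrt{n}-O(1)$ for a constant fraction of inputs using the observation that once $x$ is fixed $g(x,y)$ determines $y$ on the roughly $n/2$ zero coordinates of $x$, and plug into Lemma~\ref{l:bound}. Two small remarks: the phrase ``all but an $o(1)$ fraction'' for the Chernoff step is an overstatement (with a $\sqrt{n}$ deviation the failure probability is a constant $\approx e^{-2}$, not $o(1)$), though your later sentence correctly settles for ``a constant fraction,'' which suffices; and your closing step, using $\log\rho(\Pi_{r_0})\le\log|\rt_{r_0}|$ together with $C(t_{r_0}(x,y)\mid\Pi_{r_0})\leqp\log|\rt_{r_0}|$ to get $2\log|\rt_{r_0}|\geqp n-4\sqrt{n}$, is a slightly cleaner way of packaging the paper's two-case analysis (the case $\log\rho\ge n/2-2\sqrt{n}$ versus its complement), but is logically the same observation.
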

Let $\Pi$ be an AM protocol that computes $g$ with probability $2/3$ and let $r_0$ be as in the above argument. There are more than a fraction of $99/100$ of  $n$-bit strings $x$ having at least $n/2  - 2 \sqrt{n}$ $0$'s. For such a string $x$, when $y$ varies, $g(x, y)$ takes $2^{n/2- 2\sqrt{n}}$  different values. It follows that for some constant $c$, there are at least a $98/100$  fraction of pairs $(x,y)$ such that $C(g(x,y) \mid x,  \Pi_{r_0}) \geq n/2  - 2 \sqrt{n} - c$.  A similar fact holds if we swap $x$ and $y$.  Now we can use the analog of Equation~\eqref{e:eq2}, and obtain that for at least a fraction of $((2/3) - (96/100))$ of pairs $(x,y)$, 
\begin{equation}
\label{e:eq3}
C(t_{r_0}(x,y) \mid \Pi_{r_0} ) \geq 2(n/2 - 2 \sqrt{n} - c) - \log (\rho(\Pi_{r_0})).
\end{equation}
Either $\log(\rho(\Pi_{r_0}))$ is at least  $n/2 - 2\sqrt{n}$, or is less than $n/2 - 2\sqrt{n}$. In the former case, the communication cost of $\Pi$ is at least $n/2 - 2 \sqrt{n}$, because the number of rectangles is at least log of the thickness. In the second case, we use Equation~\eqref{e:eq3}, and we deduce again that the communication cost of $\Pi$ is at least $n/2 - 2 \sqrt{n} - O(1)$, and thus the claim is proved.

Regarding the claim about randomized protocols made at the beginning of this section, we note that a 2-party randomized protocol is a distribution over 2-party deterministic protocols, and thus they can be viewed as AM protocols with thickness equal to $1$. The above argument implies that any randomized protocol that computes $f(x,y) = x+y$ (in the group $\mathbb{Z}_2^n$) with probability $2/3$ has communication cost $2n - O(1)$  for the majority of input pairs $(x,y)$.
\fi

\if01
\begin{example}
\label{e:ex1}

Let $x$ and $y$ be two lines in the 2-dimensional affine plane over the field with $2^n$ elements and let $P$ be the point at the intersection of $x$ and $y$ (if $x$ and $y$ are parallel or equal, we let $P$ be, say, the origin).  Each one of  $x$ and $y$ can be represented by $2n$-bit strings (the slope and the intercept of each line written on $n$ bits), and $P$ also is represented with $2n$ bits for the two coordinates.   Suppose first that Alice receives $x$, Bob receives $y$, and they want to determine $P$. There is a straightforward nondeterministic communication protocol for this task  with communication cost $2n$ bits for every $(x,y)$: Merlin simply sends the point $P$ to Alice and Bob, who check that the point belongs to their lines $x$ and respectively $y$. (In case, $x$ and $y$ are parallel lines, Merlin can just send the common slope of the two lines, which takes $n$ bits).  This is clearly optimal in the \emph{worst-case} sense, because $P$ is $2n$ bits long, and thus there are $2^{2n}$ ``colors." Therefore the number of monochromatic rectangles has to be at least $2^{2n}$. 


We show that essentially no protocol with $\poly(n)$ thickness can do  better than the straightforward protocol in the \emph{average case}. We claim that for every $\epsilon > 0$, on a fraction of $(1-\epsilon)$ of the input pairs $(x,y)$, the transcript $t(x,y)$ of any protocol $\Pi$ with $\poly(n))$ thickness  has complexity $C(t(x,y) \mid \Pi) \geqp 2n - O(\log(1/\epsilon))$ (which means that Merlin needs to send  at least $2n - O(\log(1/\epsilon)) - O(\log n)$ bits for $(1-\epsilon)$ fraction of $(x,y)$).
 
 The claim is proven by the following argument. When $x$ is fixed and $y$ varies, the point $P$ at the intersection of lines $x$ and $y$, which we denote $f(x,y)$,  can be any any point on the $x$ line  and, therefore, $f(x,y)$ takes $2^n$ possible values. By a standard counting argument, for every $c$, for every $x$ and for $1-2^{-c}$ fraction of $y$'s, $C(f(x,y) \mid x, \Pi) \geq n -c$.	There is a similar bound for $C(f(x,y) \mid y, \Pi)$. It follows that for a $(1-2\cdot 2^{-c})$ fraction of $(x,y)$,  $C(f(x,y) \mid x, \Pi) + C(f(x,y) \mid y, \Pi) \geq 2n - c$.
 Now, we use Theorem~\ref{t:bound}   and obtain that for a $(1-2\cdot 2^{-c})$ fraction of $(x,y)$, $C(t(x,y) \mid \Pi) \geq 2n - c.$ The claim follows with $\epsilon = 2 \cdot 2^{-c}$.
 \if01
 One can show much more directly that any covering with $f$-monochromatic rectangles requires $2^{2n}$ rectangles, and hence the nondeterministic communication complexity of $f$  is $2n$. Indeed, since $f(x,y)$ is $2n$-bits long, there are $2^{2n}$ ``colors" and therefore the number of $f$-monochromatic rectangles has to be at least $2^{2n}$.  However, this simple argument only shows that the worst-case nondeterministic complexity of $f$ is $2n$, while the argument using Theorem~\ref{t:bound}  shows that almost the same lower bound holds for most inputs. Moreover, consider a modification of this example, in which Alice and Bob want to find only the first coordinate of the point at the intersection of their lines. The modified $f(x,y)$ takes $2^n$ values and thus the simple argument based on the number of colors gives a lower bound of $n$, while the above argument using Equation~\eqref{e:main} gives a lower bound of $2n - \log(1/\epsilon)$.

Let $x$ and $y$ be two lines in the 2-dimensional affine plane over the field with $2^n$ elements and let $f(x,y)$ be the point at the intersection of $x$ and $y$ (if $x$ and $y$ are parallel, we let $f(x,y) = 0$).  Note that $x$ and $y$ can be represented by $2n$-bit strings (the slope and the intercept of each line written on $n$ bits), and $f(x,y)$ also is represented with $2n$ bits for the two coordinates.  In the straightforward deterministic protocol, Bob sends $y$ to Alice, Alice computes the intersection point $f(x,y)$, and she sends $f(x,y)$ to Bob. The transcript is $4n$ bits long for every $(x,y)$. There is a nondeterministic communication protocol for computing $f(x,y)$ with communication cost $2n$ bits for every $(x,y)$: Merlin simply sends the point $f(x,y)$ to Alice and Bob, who check that the point belongs to their lines $x$ and respectively $y$. (In case, $x$ and $y$ are parallel lines, Merlin can just send the common slope of the two lines, which takes $n$ bits). 

We next show that essentially no protocol with $\poly(n)$ thickness can do  better. We claim that for every $\epsilon > 0$, on a fraction of $(1-\epsilon)$ of the input pairs $(x,y)$, the transcript $t(x,y)$ of any protocol $\Pi$ with $\poly(n))$ thickness  has complexity $C(t(x,y) \mid \Pi) \geqp 2n - O(\log(1/\epsilon))$ (which means that Merlin needs to send  at least $2n - O(\log(1/\epsilon)) - O(\log n)$ bits for $(1-\epsilon)$ fraction of $(x,y)$).
 
 The claim is proven by the following argument. When $x$ is fixed and $y$ varies, $f(x,y)$ can be any point on the line $x$, and, therefore, $f(x,y)$ takes $2^n$ possible values. By a standard counting argument, for every $c$, for every $x$ and for $1-2^{-c}$ fraction of $y$'s, $C(f(x,y) \mid x, \Pi) \geq n -c$.	There is a similar bound for $C(f(x,y) \mid y, \Pi)$. It follows that for a $(1-2\cdot 2^{-c})$ fraction of $(x,y)$,  $C(f(x,y) \mid x, \Pi) + C(f(x,y) \mid y, \Pi) \geq 2n - c$.
 Now, we use Theorem~\ref{t:bound} (1)  and obtain that for a $(1-2\cdot 2^{-c})$ fraction of $(x,y)$, $C(t(x,y) \mid \Pi) \geq 2n - c.$ The claim follows with $\epsilon = 2 \cdot 2^{-c}$.
 
 One can show much more directly that any covering with $f$-monochromatic rectangles requires $2^{2n}$ rectangles, and hence the nondeterministic communication complexity of $f$  is $2n$. Indeed, since $f(x,y)$ is $2n$-bits long, there are $2^{2n}$ ``colors" and therefore the number of $f$-monochromatic rectangles has to be at least $2^{2n}$.  However, this simple argument only shows that the worst-case nondeterministic complexity of $f$ is $2n$, while the argument using Theorem~\ref{t:bound}  shows that almost the same lower bound holds for most inputs. Moreover, consider a modification of this example, in which Alice and Bob want to find only the first coordinate of the point at the intersection of their lines. The modified $f(x,y)$ takes $2^n$ values and thus the simple argument based on the number of colors gives a lower bound of $n$, while the above argument using Equation~\eqref{e:main} gives a lower bound of $2n - \log(1/\epsilon)$.
 \fi
\end{example}
\medskip

\begin{example}
\label{e:ex2}
We show how to use Theorem~\ref{t:bound-2} (2) to obtain lower bounds for the communication complexity of randomized protocols. This is interesting because in general lower bounds for this type of protocols have difficult proofs.

We consider the same function $f(x,y)$ as in Example~\ref{e:ex1}, that is $f(x,y)$ is the point at the intersection of lines $x$ and $y$. Let us consider a probabilistic protocol that computes $f$ with error $\epsilon$ and communication complexity $k$. We are informal here, but this means that using random bits $r$ (which can be private or public), for every two lines $(x,y)$ in the given domain, Alice having line $x$  and Bob having line $y$, after exchanging messages, compute at the end of the protocol the intersection point $f(x,y)$ with probability $1-\epsilon$ over $r$ chosen uniformly at random,  and that the transcript is at most $k$ bits long on every $(x,y)$. 

We claim that $k \geq 2n - \log(1/\epsilon) - O(1)$.

By a standard averaging argument, there is $r_0$ so that if we fix the randomness to $r_0$, then Alice and Bob on input $x$ and respectively $y$, compute $f(x,y)$ for $(1-\epsilon)$ fraction of $(x,y)$'s.  The protocol with randomness fixed to $r_0$ is deterministic, and thus has thickness equal to one. Denoting this protocol by $\Pi$ and the transcript of this protocol on input $(x,y)$ by $t(x,y)$, and using Theorem~\ref{t:bound-2} (2), we have
\[
C(t(x,y) \mid \Pi) \geqp C(f(x,y) \mid x, \Pi) + C(f(x,y) \mid y, \Pi),
\]
for $1 - O(\epsilon)$ fraction of $(x,y)$'s. As argued in Example~\ref{e:ex1}, for a fraction of $1- O(\epsilon)$ of $(x,y)$' s
\[
C(f(x,y) \mid x, \Pi) \geq n - (1/2) \cdot \log(1/\epsilon) \mbox{ and }  C(f(x,y) \mid y, \Pi) \geq n - (1/2)\cdot \log(1/\epsilon).
\]
It follows that \[
C(t(x,y) \mid \Pi) \geqp 2n - \log(1/\epsilon),
\]
 for $1 - O(\epsilon)$ fraction of $(x,y)$'s. Since  for every $(x,y)$,
\[
C(t(x,y) \mid \Pi) \leq |t(x,y)| + O(1) \leq k + O(1),
\]
we get that $k \geqp  2n - \log(1/\epsilon) - O(1)$.
\end{example} 
\medskip

\begin{example}
We show that very simple arguments based on inequality~\eqref{e:main} can be used to derive lower bounds for the \emph{approximation} of functions.

The argument below is valid for any group, but for concreteness, let us consider the group $G = (\mathbb{Z}_2^n, +)$. Alice and Bob want to compute $f(x,y) = x+y$, where 
Alice has $x \in \mathbb{Z}_2^n$ and Bob has  $y \in \mathbb{Z}_2^n$. It is not hard to see that the communication complexity (even for nondeterministic communication protocols) is $2n$ because the only monochromatic rectangles are individual cells.

The question is whether Alice and Bob can do better if they are content with an approximation of $x+y$. Formally, let $d$ be a metric with the property that for every 
$u \in \mathbb{Z}_2^n$ and every positive integer $\delta$, the ball $B_\delta (u) = \{v \in \mathbb{Z}_2^n \mid d(u,v) \leq \delta \}$ has cardinality at most $2^{O(\delta)}$ and is effectively enumerable given $u$ and $\delta$.  For example, $d$ can be the Hamming distance. Alice and Bob need a protocol that computes the relation $\Gamma_\delta = \{(x,y,z) \in (\mathbb{Z}_2^n)^3 \mid d(x+y, z) \leq \delta\}$.  Let $(\Pi, t)$ be such a protocol with thickness $\rho(\Pi) = \poly(n)$. By Theorem~\ref{t:bound-2}(2), for every $(x,y)$,
\[
C(t(x,y) \mid \Pi) \geqp C(f(x,y) \mid x, \Pi) + C(f(x,y) \mid y, \Pi). 
\]
Since $x+y \in B_{\delta} (f(x,y))$, it follows that
\[
C(x+y \mid \Pi, x) \leq C(f(x,y) \mid x, \Pi) + O(\delta)  \mbox{  and }  C(x+y \mid y, \Pi) \leq C(f(x,y) \mid y, \Pi) + O(\delta), 
\]
because $x+y$ can be described by its ordinal in $B_{\delta} (f(x,y))$.

By a standard counting argument, for every $c$, for every $x$, and for $(1-2^{-c})$ fraction of $y$'s, $C(x+y \mid x, \Pi) \geq n- c$, and the similar relation holds for 
$C(x+y \mid y, \Pi)$.  It follows that for every $c$, and for $(1 - 2 \cdot 2^{-c})$ fraction of $(x,y)$,
\[
C(t(x,y) \mid \Pi) \geqp 2n - \delta - c.
\]
In words, the communication complexity for computing a $\delta$ approximation of $x+y$ does not drop essentially by more than $\delta$.
\end{example}
\fi

\subsection{Comparison with conventional techniques}

It is instructive to compare the technique discussed above,  based on the inequality~\eqref{e:eqmain} and its variants~\eqref{e:main} and~\eqref{e:main1}, with more standard methods.  Let us consider \emph{nondeterministic} communication protocols
(with no randomness). In this model it is easy to estimate the communication complexity of the function $f(x,y)=x+y$ (for $x,y\in \mathbb{Z}_2^n$).
By definition, a nondeterministic communication protocol can be represented as a collection of ``monochromatic'' combinatorial rectangles that cover the set of all paris of inputs (i.e., the set $ \mathbb{Z}_2^n \times  \mathbb{Z}_2^n$). The property of monochromaticity means that each rectangle should consist of pairs $(x,y)$ with one and the same value of $f(x,y)$. The communication complexity of the protocol is the logarithm of the number of rectangles in the cover.

No two pairs $(x_1, y_1)$ and $(x_2, y_2)$ can be in the same monochromatic rectangle. Indeed, either the pairs do not have the same sum, \ie,  they have different colors; or they have the same sum, but in this case the ``crossed" pairs $(x_1, y_2)$ and $(x_2, y_1)$ have a different sum. Therefore, the number of monochromatic rectangles has to be $2^n \times 2^n$, and thus the communication complexity of the protocol is not less than $2n$. This proof is a version of the fooling set argument. The same bound can be obtained with a more explicit usage  of the standard techniques of fooling sets or linear rank, which have been used to establish lower bounds for many communication problems.
There are $2^n$ possible values of $x+y$, so we must have rectangles colored in each of the $2^n$ possible values. Next, for a fixed value $z$, the relation that consists of all pairs $(x,y)$ such that $x+y=z$ is isomorphic to the relation of identity ($x$ and $z-y$ must be equal to each other). And for this predicate the fooling set or linear rank methods imply that the cover contains at least  $2^n$ rectangles. Summing up the number of rectangles for all $z$ we conclude that the cover consists of at least $2^n \times 2^n$ monochromatic rectangles,

 This argument works for protocols with any thickness. So with a very simple argument we have obtained a statement which is even stronger then the bound $2n - \log(\rho(\Pi)) - O(1)$ that follows from inequality~\eqref{e:eqmain}. However, the conventional techniques are not stable with respect to a random perturbation. When we change the value of $f(x,y)$ for a fraction $\epsilon$ of all pairs $(x,y)$, we can corrupt all large enough fooling sets or dramatically reduce the linear rank. 
On the other hand, the bound based on the information inequality~\eqref{e:eqmain} remains valid (though we need to assume that the thickness of
the protocol is bounded). Thus, the new technique has an advantage if we deal with ``randomly perturbed'' versions of well studied functions. Roughly speaking, with the new argument we gain the factor of $2$ compared to the simpler standard bounds: for protocol with low thickness we obtain a lower bound $\approx 2n$, while the trivial lower bound (the logarithm of the number of colors) is $n$. We used this property of ``robustness'' to prove a lower bound for AM communication protocols in Section~\ref{s:am}.

As an additional example, consider the communication complexity of computing an approximation of $x+y$ (for $x,y\in \mathbb{Z}_2^n$).  Thus, Alice and Bob want to compute  $\fff$ as a $\delta n$-approximation of $x+y$, meaning that the Hamming distance between $\fff$ and $x+y$ is bounded by $\delta n$, for $\delta < 1/2$.   By Theorem~\ref{t:bound-2}, part $2$, for any protocol $\Pi$ computing such an approximation $\fff$ and for all pairs $(x,y)$, we have  $C(\ttt \mid \Pi) \geqp C(\fff \mid x, \Pi) + C(\fff \mid y, \Pi) - \log \rho$. Since $x+y$ can be computed from $\fff$ and $h(\delta)n$ bits, where $h(\delta) =  \delta \log (1/\delta) + (1-\delta) \log (1/(1-\delta))$, the right hand side of the inequality is $\geqp C(x+y \mid x, \Pi) - h(\delta)n + C(x+y \mid y, \Pi) - h(\delta)n - \log \rho$. For most pairs $(x,y)$, $C(x+y \mid x, \Pi) \geqp n$  and $C(x+y \mid y, \Pi) \geqp n$. We conclude  that for most pairs $(x,y)$, $C(\ttt \mid \Pi) \geqp 2(1-h(\delta))n - \log \rho$.  

\begin{remark}
There is another nice property of the technique based on the inequality~\eqref{e:eqmain}: the bound holds true for protocols where the value $f(x,y)$ is not 
embedded explicitly in the transcript $t(x,y)$ of the protocol. As stated in Lemma~\ref{l:bound}, the bound applies to the protocols where Alice and Bob can compute $f(x,y)$ given
the transcript together with their inputs, $x$ or $y$ respectively (so Alice and Bob can find $f(x,y)$, while the external observer who accesses only 
the transcript possibly cannot reconstruct the value of $f$).
\end{remark}

\section{The information complexity of communication protocols}
\label{s:infcomplexity}

In the standard setting of information theory and Shannon entropy, there are two types of information complexity, internal and external.  We focus on the first one, which has more applications, and define the analog concept in the framework of Kolmogorov complexity.
\begin{definition}
The internal information cost of a protocol $\Pi = (\rt, t)$ for input pair $(x,y)$ is
\[
\mathrm{IC}_\Pi (x, y) = I(x:\ttt \mid y, \Pi) + I(y:\ttt \mid x,  \Pi).
\]
\end{definition}
The internal information cost is the amount of  information each party learns about the other party's input from the transcript of the protocol. Intuitively, it should not be more than the complexity of the transcript.  

The next theorem concerns the case of nondeterministic protocol and shows that the internal information complexity is bounded by the sum between the complexity of the transcript and the logarithm of the thickness.  This validates  the intuition mentioned in the above paragraph,  up to  a $O(\log n)$ loss of a precision, for the class of protocols with polynomial thickness, which includes the class of deterministic protocols (because such a protocol has thickness equal to $1$).  The theorem is the Kolmogorov complexity analog of a result of Braverman and Rao~\cite{bra-rao:c:infcomplexity}. We note that the proof of Braverman and Rao cannot be adapted, because they consider only randomized protocols (so, without Merlin) and their proof works inductively on the number of rounds of the Alice/Bob interaction.  In our setting, Merlin also contributes to the communication complexity and this component  is not handled by the technique in~~\cite{bra-rao:c:infcomplexity}, as we have explained in Remark~\eqref{r:remproof}.

\begin{theorem}
For every protocol $\Pi = (\rt,t)$ and every input pair $(x,y)$,
\begin{itemize}
\item[(1)] $\mathrm{IC}_\Pi(x,y) \eqp C(\ttt \mid \Pi) - I(x:y:\ttt \mid \Pi)$.
\item [(2)] $\mathrm{IC}_\Pi(x,y) \leqp C(\ttt \mid \Pi) + \log \rho(\ttt)$.
\end{itemize}
\end{theorem}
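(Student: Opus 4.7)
The plan is to derive both parts essentially by algebraic manipulation, relying on Lemma~\ref{l:formulas} (relativized to $\Pi$) and the Main Inequality (Theorem~\ref{t:extension}). Throughout I would drop the conditioning on $\Pi$ to keep expressions readable.

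For part (1), I would first simplify each of the two terms defining $\mathrm{IC}_\Pi(x,y)$. Since $\ttt$ is computable from $(x,y)$ given $\Pi$, we have $C(\ttt\mid x,y,\Pi)=O(1)$, hence $C(x,\ttt\mid y,\Pi)\eqp C(x\mid y,\Pi)$. Expanding the definition of conditional mutual information gives
\[
I(x:\ttt\mid y,\Pi)\eqp C(\ttt\mid y,\Pi),
\]
and symmetrically $I(y:\ttt\mid x,\Pi)\eqp C(\ttt\mid x,\Pi)$. Thus $\mathrm{IC}_\Pi(x,y)\eqp C(\ttt\mid x,\Pi)+C(\ttt\mid y,\Pi)$. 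On the other side, Lemma~\ref{l:formulas}(2) relativized to $\Pi$, applied with $u=x,v=y,w=\ttt$ and again using $C(\ttt\mid x,y,\Pi)=O(1)$, yields
\[
I(x:y:\ttt\mid\Pi)\eqp C(\ttt\mid\Pi)-C(\ttt\mid x,\Pi)-C(\ttt\mid y,\Pi).
\]
Combining these two identities gives $\mathrm{IC}_\Pi(x,y)\eqp C(\ttt\mid\Pi)-I(x:y:\ttt\mid\Pi)$.

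For part (2), I would appeal to the Main Inequality. By Lemma~\ref{l:formulas}(1) relativized to $\Pi$,
\[
I(x:y:\ttt\mid\Pi)\eqp I(x:y\mid\Pi)-I(x:y\mid\ttt,\Pi),
\]
and Theorem~\ref{t:extension} states exactly that the right-hand side is $\geqp -\log\rho(\ttt)$. Plugging this lower bound on $I(x:y:\ttt\mid\Pi)$ into the identity from part (1) immediately gives $\mathrm{IC}_\Pi(x,y)\leqp C(\ttt\mid\Pi)+\log\rho(\ttt)$.

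Neither step is a serious obstacle; the content of the theorem lies entirely in the Main Inequality, which has already been proved. The only thing to watch is hygiene with the $\eqp/\geqp/\leqp$ notation and making sure each invocation of Lemma~\ref{l:formulas} is legitimately relativized to $\Pi$ (so the $O(\log n)$ precision is unaffected by the additional conditioning on the protocol description).
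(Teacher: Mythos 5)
Your proof is correct and follows essentially the same route as the paper: first reduce $\mathrm{IC}_\Pi(x,y)$ to $C(\ttt\mid x,\Pi)+C(\ttt\mid y,\Pi)$ using $C(\ttt\mid x,y,\Pi)=O(1)$, then obtain (1) from Lemma~\ref{l:formulas}(2) (the paper invokes Lemma~\ref{l:mutualf}(2), which is just that lemma specialized under the same $O(1)$ observation), and obtain (2) from Lemma~\ref{l:formulas}(1) together with Theorem~\ref{t:extension}. The only differences are cosmetic choices in how the chain-rule expansions are arranged.
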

\begin{proof}
To keep the notation simple, in all the $C(\ldots)$ and $I(\ldots)$ below we omit the conditioning on $\Pi$. Note that
\begin{equation}
\label{e5}
\begin{array}{ll}
I(x: \ttt \mid y)  & \eqp   C(x \mid y )  - C(x \mid \ttt, y)  \\
& \eqp C(x,y) - C(y) - C(x,\ttt , y) + C(\ttt, y) \\
& \eqp C(\ttt , y) - C(y)  \\
& \eqp  C(\ttt \mid y).
\end{array}
\end{equation}
In the third line we have used the fact that, given $\Pi$, $\ttt$ can be computed from $x$ and $y$.  Similarly,   $I(y:\ttt \mid x) \eqp C(\ttt \mid x)$.

Consequently, the internal information cost $\mathrm{IC}_\Pi(x,y)$ is equal  to $C(\ttt \mid x) + C(\ttt \mid y)$ (up to $O(\log n)$ precision).

Now item (1)  follows from Lemma~\ref{l:mutualf} (2).

Item  (2)  holds because $I(x:y:\ttt)  \eqp I(x:y) - I (x:y \mid \ttt)$ (Lemma~\ref{l:formulas} (1)) and, by Theorem~\ref{t:extension},  the latter expression is at least $- \log \rho(\ttt)$.  

\if01
\begin{equation}
\label{e6}
\begin{array}{ll}
C(t) & \geqp   I (a : b \mid f(a,b))  - I(a:b)  +  C(t \mid  a,f(a,b))  + C(f(a,b)) -  C(f(a,b) \mid b) +  C(t \mid b) \\
 & \hspace{5cm} \mbox{(Eq.~\eqref{e4})} \\
 & \eqp C(f(a,b) \mid a) + C(f(a,b) \mid b) +   C(t \mid  a,f(a,b))  -  C(f(a,b) \mid b) +  C(t \mid b) \\
  & \hspace{5cm} \mbox{(by Lemma~\ref{l:mutualf})} \\
  & \eqp C(f(a,b) \mid a) + C(t \mid a, f(a,b)) + C(t \mid b) \\
  &\eqp C(a, f(a.b)) - C(a) + C(t, a, f(a,b)) - C(a, f(a,b)) + C(t \mid b) \\
  & \eqp - C(a)  +  C(t, a) + C(t \mid b) \\
  &\eqp C(t \mid a) + C(t \mid b).
\end{array}
\end{equation}
\fi

\end{proof}

\section{Lower bounds for multi-party protocols}
\label{s:multiparty}
The results in the  Section~\ref{s:twoparty} can be extended to protocols involving more than two parties, in the so called \emph{number-in-hand} model. In this model, there are $\ell  \geq 2$ parties, $P_1, \ldots, P_\ell$ and a function $f$ of $\ell$ arguments. For every $i \in [\ell]$, party $P_i$ has input $x_i$, and the goal is for every $P_i$ to compute the value $f(x_1, \ldots, x_\ell)$ via some communication protocol.  The basic considerations in Section~\ref{s:twoparty} remain valid, modulo straightforward adjustments. In particular, the set of inputs on which the protocol has a given transcript $t$ is again a rectangle, where this time a rectangle is a set $R$ of the form $R= S_1 \times \ldots \times S_\ell$.  Therefore, deterministic protocols induce a partition of the domain of $f$ into monochromatic rectangles, and nondeterministic protocols induce a covering of the domain of $f$ with monochromatic rectangles. Definitions~\ref{d:protocol},~\ref{d:thick},  and~\ref{d:protocolcompute} have straightforward analogs for the case of $\ell$ parties.

The following is a generalization of Theorem~\ref{t:extension} and the proof is similar.
\begin{theorem}
\label{t:extension-multi}
For every protocol $\Pi = (\rt, t)$, and for every $(x_1, \ldots, x_\ell) \in D$, 
\[
C(\ttx \mid \Pi) \geqp \frac{1}{\ell - 1} \sum_{i=1}^\ell C(\ttx \mid x_i, \Pi)  - \frac{1}{\ell - 1}  \log \rho(\ttx),
\]
where $\ttx$ is an abbreviation for $t(x_1, \ldots, x_\ell)$.
\end{theorem}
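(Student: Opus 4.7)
The plan is to mimic the clone-argument used in the proof of Theorem~\ref{t:extension}, replacing ``rows'' and ``columns'' of a 2-party rectangle by the $\ell$ coordinate slices of a multi-party rectangle. Write $\ttx = S_1 \times \cdots \times S_\ell$. For each $i \in [\ell]$, define the set of clones
\[
\mathrm{Clones}_i \;=\; \{\, x'_i \in S_i \;:\; C(x'_i \mid \Pi) \leq C(x_i \mid \Pi)\,\}.
\]
Since, given $\Pi$, $\ttx$, and (a self-delimited encoding of) $C(x_i\mid\Pi)$ we can enumerate $\mathrm{Clones}_i$, and since $x_i\in\mathrm{Clones}_i$, the counting bound gives $|\mathrm{Clones}_i|\geq 2^{C(x_i\mid \ttx,\Pi)-O(\log n)}$, exactly as in the two-party proof.

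Next I would pick a tuple $(x'_1,\ldots,x'_\ell)\in\prod_i\mathrm{Clones}_i$ maximizing $C(x'_1,\ldots,x'_\ell\mid \ttx,\Pi)$. By the counting bound above,
\[
C(x'_1,\ldots,x'_\ell\mid \ttx,\Pi) \;\geq\; \log\prod_i|\mathrm{Clones}_i| \;\geqp\; \sum_{i=1}^\ell C(x_i\mid \ttx,\Pi),
\]
and hence $C(x'_1,\ldots,x'_\ell,\ttx\mid\Pi) \geqp C(\ttx\mid\Pi) + \sum_i C(x_i\mid\ttx,\Pi)$ by the chain rule. Going the other way, using the definition of clones and subadditivity,
\[
\sum_{i=1}^\ell C(x_i\mid\Pi) \;\geqp\; \sum_{i=1}^\ell C(x'_i\mid\Pi) \;\geqp\; C(x'_1,\ldots,x'_\ell\mid\Pi) \;\geqp\; C(x'_1,\ldots,x'_\ell,\ttx\mid\Pi) - \log\rho(\ttx),
\]
where the last step uses that $(x'_1,\ldots,x'_\ell)\in\ttx$, so $\ttx$ is one of at most $\rho(x'_1,\ldots,x'_\ell)\leq\rho(\ttx)$ rectangles of $\rt$ containing the tuple; thus $C(\ttx\mid x'_1,\ldots,x'_\ell,\Pi)\leq\log\rho(\ttx)+O(1)$. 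This is the multi-party analog of the crucial ``overlap'' step in the proof of Theorem~\ref{t:extension}.

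Chaining the two bounds yields
\[
\sum_{i=1}^\ell C(x_i\mid\Pi) \;\geqp\; C(\ttx\mid\Pi) + \sum_{i=1}^\ell C(x_i\mid\ttx,\Pi) - \log\rho(\ttx),
\]
i.e., $\sum_i I(x_i:\ttx\mid\Pi) \geqp C(\ttx\mid\Pi) - \log\rho(\ttx)$. To convert this into the target form I would use the identity $I(x_i:\ttx\mid\Pi)\eqp C(\ttx\mid\Pi)-C(\ttx\mid x_i,\Pi)$, which is justified because $\ttx$ is determined by $(x_1,\ldots,x_\ell,\Pi)$, so that $C(x_i,\ttx\mid\Pi)\eqp C(x_i\mid\Pi)+C(\ttx\mid x_i,\Pi)$ and also $\eqp C(\ttx\mid\Pi)+C(x_i\mid\ttx,\Pi)$. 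Substituting, $\ell\cdot C(\ttx\mid\Pi)-\sum_i C(\ttx\mid x_i,\Pi)\geqp C(\ttx\mid\Pi)-\log\rho(\ttx)$, and dividing by $\ell-1$ delivers the stated inequality.

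The main obstacle is not conceptual but bookkeeping: making sure the clone-counting argument still produces the clean bound $2^{C(x_i\mid \ttx,\Pi)-O(\log n)}$ when $\ell$ is an arbitrary constant (so the $O(\log n)$ absorbs the $\ell$ self-delimited descriptions of the $C(x_i\mid\Pi)$'s), and checking that the additive slack accumulated in the $\ell$ applications of the chain rule and in the maximization step remains $O(\log n)$ before the division by $\ell-1$. Everything else is a direct transcription of the two-party argument.
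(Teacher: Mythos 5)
Your proof is correct and takes essentially the same clone-based route as the paper's two-party proof of Theorem~\ref{t:extension}; the paper itself provides no separate argument for Theorem~\ref{t:extension-multi}, merely noting that ``the proof is similar,'' and your write-up is exactly the transcription one would expect. One small remark: the identity $I(x_i:\ttx\mid\Pi)\eqp C(\ttx\mid\Pi)-C(\ttx\mid x_i,\Pi)$ that you invoke at the end is an unconditional consequence of the Kolmogorov--Levin chain rule (symmetry of information) and does not actually depend on $\ttx$ being determined by $(x_1,\ldots,x_\ell,\Pi)$.
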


The following is a generalization of Theorem~\ref{t:bound-1} and the proof is similar.
 \begin{theorem} 
\label{t:bound-multi}
For every computable function $f$ of $\ell$ arguments, for every protocol $(\Pi, t)$ that computes $f$ over domain $D$, and for every $(x_1, \ldots, x_\ell) \in D$,
\begin{equation}
\label{e:main2}
C(\ttx \mid \Pi) \geqp \frac{1}{\ell-1}\bigg( \sum_{i=1}^{\ell} C(\ffx \mid x_i, \Pi) +   \sum_{i=1}^{\ell} C(\ttx \mid x_i, \ffx, \Pi)-  \log \rho(\ttx)
  \bigg),
\end{equation}
where $\ttx$ is an abbreviation for $t(x_1, \ldots, x_\ell)$ and $\ffx$ is an abbreviation for $f(x_1, \ldots, x_\ell)$.
\end{theorem}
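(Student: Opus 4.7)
The plan is to imitate the derivation of Theorem~\ref{t:bound-1} from Theorem~\ref{t:extension}, but to start from the $\ell$-party inequality of Theorem~\ref{t:extension-multi} instead of its two-party counterpart. As in Lemma~\ref{l:bound}, I would suppress the conditioning on $\Pi$ throughout the computation to keep the displays readable; it is carried along silently everywhere.

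The first step is to invoke Theorem~\ref{t:extension-multi}, which already provides the bound
\[
C(\ttx) \;\geqp\; \frac{1}{\ell-1} \sum_{i=1}^{\ell} C(\ttx \mid x_i) \;-\; \frac{1}{\ell-1}\log \rho(\ttx).
\]
The second step is to introduce $\ffx$ into each term on the right-hand side. Because $(\Pi, t)$ computes $f$, Definition~\ref{d:protocolcompute} says every rectangle in $\rt$ is $f$-monochromatic, so the color $\ffx$ of the rectangle $\ttx$ is determined by $\ttx$ together with $\Pi$; consequently $C(\ffx \mid \ttx, x_i, \Pi) = O(1)$ for every $i$. Applying the chain rule to $C(\ttx, \ffx \mid x_i)$ in both orderings and using this $O(1)$ bound yields the identity
\[
C(\ttx \mid x_i) \;\eqp\; C(\ffx \mid x_i) + C(\ttx \mid x_i, \ffx),
\]
which is the $\ell$-party analogue of the splitting step used in Lemma~\ref{l:bound}. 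The third step is to sum this identity over $i = 1, \ldots, \ell$ and plug it into the bound from the first step; this directly produces~\eqref{e:main2}.

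The only point that warrants attention is the arithmetic of the $\geqp$ precision: the chain-rule identity is used $\ell$ times, so the splitting step contributes a total slack of $O(\ell \log n)$. For $\ell$ constant (or even $\ell = O(\polylog n)$) this is absorbed silently into the $\geqp$ notation, but if one wanted the bound with $\ell$ allowed to grow like a positive power of $n$, one would have to track this factor explicitly. The genuine content of the argument is already in Theorem~\ref{t:extension-multi}, whose proof I would obtain by a straightforward extension of Theorem~\ref{t:extension}: pick clones $x_i'$ of each $x_i$ conditional on $\ttx$, use subadditivity of Kolmogorov complexity in the form $\sum_i C(x_i') \geq C(x_1', \ldots, x_\ell')$, combine with the thickness bound $C(\ttx \mid x_1', \ldots, x_\ell') \leq \log \rho(\ttx) + O(1)$ that comes from $(x_1', \ldots, x_\ell')$ belonging to at most $\rho(\ttx)$ rectangles of $\rt$, and rearrange. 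Thus the main obstacle is not conceptual but organizational: keeping the $O(\ell \log n)$ book-keeping under control across the clone construction and the $\ell$-fold chain rule.
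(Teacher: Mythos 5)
Your proposal is correct and follows exactly the route the paper intends: the paper only says that the proof of Theorem~\ref{t:bound-multi} is ``similar'' to that of Theorem~\ref{t:bound-1}, and your two steps (invoke Theorem~\ref{t:extension-multi} in the form $C(\ttx)\geqp\tfrac{1}{\ell-1}\sum_i C(\ttx\mid x_i)-\tfrac{1}{\ell-1}\log\rho(\ttx)$, then split each $C(\ttx\mid x_i)$ via the chain rule using $C(\ffx\mid\ttx,x_i,\Pi)=O(1)$ from monochromaticity) are precisely the $\ell$-party analogue of the manipulations in Lemma~\ref{l:bound}. One minor remark: your worry about the $O(\ell\log n)$ slack is overcautious, since the $\tfrac{1}{\ell-1}$ prefactor already divides it down to $O(\log n)$ for any $\ell\geq 2$, so the $\geqp$ notation remains honest without any restriction on how $\ell$ grows.
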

\begin{example}
We present a lower bound for $\ell$-party protocols $\Pi$ which compute the following function $f$.  Here $\ell$ is a constant and the inputs $x_1, \ldots, x_{\ell-1}$ are $n$ dimensional column vectors and $x_\ell$ is an $\ell-1$ dimensional column vector  over $\mathbb{Z}_2$. The function $f(x_1, \ldots, x_{\ell-1}, x_\ell)$ is defined as $A \cdot x_\ell$ where $A$ is the $n$-by-$(\ell-1)$ matrix having the columns $x_1, \ldots, x_{\ell-1}$. 
Since $A \cdot x_\ell$ is $n$ bits long, there are  $2^n$ ``colors" and therefore the number of $f$-monochromatic rectangles is at least $2^n$.  Using Theorem~\ref{t:bound-multi} we obtain a better lower bound.  If $n$ is sufficiently large, $A$ has rank $\ell-1$ with high probability when its columns are chosen uniformly at random. It can be shown that with probability $1-O(\epsilon)$, $C(f(x_1, \ldots, x_\ell) \mid x_i, \Pi) \geq n - \log(1/\epsilon)$ for every $i \in [\ell]$.
Now it follows from inequality~\eqref{e:main2} that with probability $1-O(\epsilon)$
\begin{equation*}
\begin{array}{ll}
C(t(x_1, \ldots, x_\ell) \mid \Pi) & \geqp \frac{1}{\ell-1}\bigg( \sum_{i=1}^{\ell} C(f(x_1, \ldots, x_\ell) \mid x_i, \Pi)  - \log(\rho_{x_1, \ldots, x_\ell}) 
 \bigg)\\
& \geqp \frac{\ell}{\ell-1} (n-\log(1/\epsilon)) - \frac{1}{\ell-1}  \log \rho(\ttx).
\end{array}
\end{equation*}
For example, if $\ell = 3$, the above argument shows that in any protocol with $2^{o(n)}$- thickness Merlin needs to send at least approximately $(3/2) n - o(n) - O(1/\epsilon)$ bits for a fraction of $(1-\epsilon)$ of input tuples, whereas the simple argument based on the number of ``colors" only gives a lower bound of $n$ and furthermore this lower bound is in the worst-case sense.
\end{example}


\bibliography{theory-3}

\bibliographystyle{alpha}

\end{document}